\documentclass[lettersize,journal,10pt]{IEEEtran}
\usepackage{algorithmic}
\usepackage{algorithm}
\usepackage[caption=false,font=normalsize,labelfont=sf,textfont=sf]{}
\usepackage{textcomp}
\usepackage{stfloats}
\usepackage{url}
\usepackage{verbatim}
\usepackage{graphicx}
\usepackage{titlesec}
\usepackage[]{todonotes}
\titlespacing{\subsection}{1pt}{3pt}{3pt}

\usepackage{booktabs}

\usepackage{amsthm,amssymb, amsmath}
\usepackage[english]{babel}
\usepackage{tabularx} 

\usepackage{array}
\usepackage{amsmath}
\usepackage{multicol}
\usepackage{float}  \usepackage{tikz}
\usepackage{amsfonts}
\usepackage{multirow}
\usepackage{caption}
\usepackage{hyperref}
\usepackage{subcaption}
\usepackage{colortbl}
\usepackage{tikz}
\usetikzlibrary{positioning, arrows.meta}

\makeatletter
\def\footnoterule{\kern-3\p@
\hrule \@width 2in \kern 2.6\p@} \makeatother

\newcommand{\locref}[2]{\hyperref[#1]{#2~\ref{#1}}}
    
\newcommand{\loceqref}[2]{\hyperref[#1]{#2~(\ref{#1})}}
\newcommand{\cqfd}{\mbox{}\nolinebreak\hfill\rule{2mm}{2mm}\medbreak\par}

\newtheorem{remark}{Remark}
\newtheorem{theorem}{Theorem}
\newtheorem{lemma}{Lemma}
\newtheorem{definition}{Definition}

\newtheorem{assumption}{Assumption}

\DeclareMathOperator*{\argmax}{arg\,max}
\DeclareMathOperator*{\argmin}{arg\,min}

\newcommand{\abs}[1]{\left|#1\right|}

\usepackage{url}

\usepackage{enumitem}

\begin{document}

\title{Game-Theoretic Modeling of Stealthy Intrusion Defense against MDP-Based Attackers}

\author{Willie KOUAM, Johannes Kepler University Linz, Austria, arnold.kouam\_kounchou@jku.at\\
Stefan RASS, Johannes Kepler University Linz, Austria, stefan.rass@jku.at}

\maketitle

\begin{abstract}

The rapid expansion of Internet use has increased system exposure to cyber threats, with advanced persistent threats (APTs) being especially challenging due to their stealth, prolonged duration, and multi-stage attacks targeting high-value assets. In this study, we model APT evolution as a strategic interaction between an attacker and a defender on an attack graph. With limited information about the attacker’s position and progress, the defender acts at random intervals by deploying intrusion detection sensors across the network. Once a compromise is detected, affected components are immediately secured through measures such as backdoor removal, patching, or system reconfiguration. Meanwhile, the attacker begins with reconnaissance and then proceeds through the network, exploiting vulnerabilities and installing backdoors to maintain persistent access and adaptive movement. Furthermore, the attacker may take several steps between consecutive defensive operations, resulting in an asymmetric temporal dynamic. The defender's goal is to reduce the likelihood that the attacker will gain access to a critical asset, whereas the attacker's purpose is to increase this likelihood. We investigate this interaction under three informational regimes, reflecting varying levels of attacker knowledge prior to action: (i) a Stackelberg scenario, in which the attacker has full knowledge of the defender’s strategy and can optimize accordingly; (ii) a blind regime, where the attacker has no information and assumes uniform beliefs about defensive deployments; and (iii) a belief-based framework, where the attacker holds accurate probabilistic beliefs about the defender’s actions. For each regime, we derive optimal defensive strategies by solving the corresponding optimization problems.

\end{abstract}

\begin{IEEEkeywords}
Game theory, Stealthy intrusion, Attack graph, Advanced persistent threats, Cyber deception, Cyber security.
\end{IEEEkeywords}

\section{Introduction}

The increasing frequency and impact of cyber attacks require proactive defense mechanisms that enable early detection and mitigation. In particular, APTs pose a significant challenge due to their stealthy nature and their ability to progress through multiple stages over extended periods of time. Their evolution is commonly described by the cyber kill chain \cite{kamhoua2018game}, which encompasses reconnaissance, exploitation, command and control, privilege escalation, lateral movement, and ultimately, the compromise of a specific target. This work proposes a game-theoretic framework to support the operational management of network defense against stealthy and sophisticated attackers. As observed in numerous practical cases, we assume that the attacker has already gained an initial foothold in the system, and the defender’s objective is to prevent further compromise. The interaction is therefore modeled as a strategic game between a defender protecting critical assets and an attacker attempting to reach them. The battlefield in which this confrontation takes place is represented by an attack graph, a standard tool for security assessment in enterprise networks, constructed from threat modeling and refined through penetration testing and vulnerability scanning (see, for instance \cite{barik2016attack}). Within this environment, we consider a repeated stealthy intrusion game: the attacker progresses along one or more attack paths toward a critical asset during periods when the defender is inactive, while the defender intervenes randomly to disrupt this progression. The game ends when the attacker reaches the critical asset, resulting in a loss for the defender.

This problem has been previously studied in the \emph{Cut-The-Rope (CTR)} game introduced by Rass et al. \cite{rass2019cut, rass2023game}. In that model, the attacker simultaneously advances along all available attack paths, analogous to moving multiple pieces on a chessboard. Since the defender cannot observe the attacker’s exact position on any path, they consider a cohort of attacker “clones” starting from every possible node and moving toward the target. From the defender’s perspective, the attacker’s strategic choice reduces to selecting an attack path $\pi$, even though the precise progress along that path remains unknown. The defender then acts as if all clones on $\pi$ advance simultaneously. A key advantage of the CTR framework is its flexibility, allowing defenders to act decisively even with limited information. 
However, the standard CTR framework is modeled as a normal-form game, where the attacker lacks insight into the defense strategy, ignoring to a certain extent the operational reality of APTs. In practice (especially when the network administrator or defender acts first), the reconnaissance phase is a cornerstone of modern cyber-attacks, providing attackers with some understanding of the network. To address this limitation, we extend the CTR framework by explicitly incorporating reconnaissance information into the attacker’s problem, modeling it as a Markov Decision Process (MDP). In this extended model, the defender acts first and allocates intrusion detectors across the network to detect ongoing attacks. We therefore analyze three scenarios reflecting different levels of attacker information:
\begin{enumerate}[label=\roman*.]
    \item Stackelberg game approach: In this configuration, we assume the adversary possesses perfect knowledge of the defender's strategy. This leader-follower configuration evaluates the defender’s performance in a worst-case environment, where the attacker can optimally respond to committed defenses. \label{first_scenario_test}

    \item Probabilistic inference: The attacker possesses partial information and bases decisions on a probability distribution of expected defensive actions, using prior beliefs to guide progression.

    \item Strategic blindness (reconnaissance disruption): The attacker has no reliable knowledge of the defender’s behavior, often due to proactive measures such as moving target defense or cyber deception, which effectively neutralize reconnaissance and force the adversary to operate blindly.
\end{enumerate}

This scenario~\eqref{first_scenario_test} naturally gives rise to a Stochastic Stackelberg Game (SSG), where the defender, acting as the leader, commits to a detection strategy under resource constraints, and the attacker, as the follower, computes a best-response policy within the induced MDP. Similar formulations have been studied in the literature. For example, Li et al. \cite{li2022synthesis} proposed a mixed-integer linear program for jointly allocating detectors and stealthy sensors, while Sengupta et al. \cite{sengupta2018moving} formulated the issue of deploying a limited number of IDS in a large cloud network as a Stackelberg Game between a stealthy attacker and the cloud administrator. Other studies, like \cite{ma2023optimizing}, model uncertainty through a finite set of attacker types, each associated with a distinct reward function, and propose robust defense strategies based on worst-case absolute regret minimization. Resource allocation problems of this nature have also been widely studied in other contexts, such as preventing pollution attacks \cite{Tong2016Unified} or minimizing a joint cost function that accounts for the weighted trade-off between energy expenditure and execution latency, encompassing both MEC-based computational tasks and blockchain-specific processing requirements \cite{Chen2025Blockchain}, or preventing epidemic spread in networks \cite{Zhao2019Propagation, kouam2025exploring}. Our approach exploits the formalism in \cite{ma2023optimizing}, where the defender's actions directly parameterize the transition and reward structures of the attacker's MDP. By modeling the defense as a structural shift in the adversary’s decision environment, we establish our framework based on the following foundational assumptions.

\begin{itemize}
    \item The attack graph $G = (V, E)$ is a directed acyclic graph, where nodes represent system vulnerabilities or compromised states, and edges correspond to exploits or transitions between components. All attack paths lead to some  designated target asset $v_t \in V$. The acyclic property reflects that an attacker would not intentionally loop infinitely, if an exploit was successful. This, however, does not preclude the return to an earlier stage in the attack graph, for any reason (e.g., dead end discovery or defender's actions).

    \item The defender acts first by deploying automated updates or remediation actions whenever malicious activity is detected, while the attacker is already present at an unknown location in the system. However, due to operational constraints (e.g., lack of real-time alert transmission, periodic offline analysis, or one-time sensors that do not retain memory \cite{depolli2024offline}), the defender does not receive any alert information before making the subsequent deployment decision. Consequently, no belief updating can occur, and the defender acts under complete uncertainty at each stage. The defender, therefore, assumes that the attacker’s location is uniformly distributed across all possible nodes in the attack graph, except target nodes, to avoid trivial cases.

    \item Before executing the attack, the adversary performs a reconnaissance phase during which it gathers information about the network and the defender’s defensive actions. Based on this information, the attacker plans its strategy and may launch parallel or concurrent attacks, potentially exploiting multiple attack paths simultaneously to increase the likelihood of reaching the target.

    \item Recognizing that enterprise network defenders require clear, decisive actions to secure a network, we adopt a pure strategy approach. This ensures our findings align with real-world decision-making processes where randomized responses are often difficult to implement.
\end{itemize}

Following the CTR framework \cite{rass2023game}, we analyze a key operational regime that encodes the timing of actions for both attackers and defenders. This regime captures realistic operational constraints, producing dynamic interactions between the two parties. In our model, defender deployments occur at random intervals following an exponential distribution, reflecting environments with continuous monitoring, such as organizations with fully staffed security operations centers. Defensive actions thus follow a Poisson process with intensity $\lambda_D$, while offensive actions unfold within the stochastic pauses between deployments. Within these exponentially distributed intervals, the attacker executes a Poisson($\lambda$)-distributed number of steps, with the pause duration being governed by the defender's activity rate $\lambda_D$. The main contributions of this paper are as follows:

\begin{enumerate}
    \item We extend the CTR framework by modeling the attacker’s progression as a MDP in which state-dependent routing decisions at compromised nodes account for knowledge about the defender’s deployments. This approach modifies the MDP’s transition and reward structures to capture detection and mitigation, accurately representing asymmetric temporal dynamics while preserving the attack-graph structure.

    \item We formalize the interaction as a game under three attacker-information regimes: (i) a full-information Stackelberg setting, (ii) a belief-based regime with probabilistic attacker intelligence, and (iii) a blind regime induced by reconnaissance disruption. Each regime defines a distinct defender optimization problem and yields a specific attacker best-response policy.

    \item For each information regime, we employ approaches including linear programming, mixed-integer optimization, and Monte Carlo–based approximations. We validate our methods using detailed attack graphs of the Unguard network\footnote{https://github.com/dynatrace-oss/unguard} and two robotic systems: the Modular Articulated Robotic Arm (MARA) and the Mobile Industrial Robotics MiR100 platform. These graphs are based on comprehensive vulnerability assessments conducted by security researchers\footnote{https://news.aliasrobotics.com/the-week-of-mobile-industrial-robots-bugs/}\footnote{https://aliasrobotics.com/case-study-threat-model-mara.php, https://aliasrobotics.com/case-study-pentesting-MiR.php}. Our results show that optimized detector placement significantly outperforms baseline heuristics and quantify the defensive value of limiting attacker reconnaissance.
\end{enumerate}

\section{Motivation and Model overview}

\subsection{Notation and symbols}

In the following, sets are denoted by capital letters ($S, T$), vectors by bold characters ($\mathbf{v}, \mathbf{w}$), scalars by lowercase letters ($x, y$), random variable by $\widetilde{N}$ and agents' action sets by cursive letters ($\mathcal{A}, \mathcal{B}$). Table~\ref{symbols} summarizes the notation, providing each symbol alongside its corresponding definition.

\begin{table}[ht!]
\centering
\caption{Table of Symbols}
\label{symbols}
\begin{tabular}{ll}
\hline
\textbf{Symbol} & \textbf{Definition} \\
\hline
$G=(V,E)$ & Directed acyclic attack graph \\
$V$, $E$ & Set of nodes and directed edges respectively\\
$F$ & Set of target nodes (critical assets) \\
$S$, $\boldsymbol{s} \in S$ & State space of the attacker MDP, system's state\\
$\mathcal{A}_2(s)$ & Attacker actions available at state $s$ \\
$\mathcal{A}_1$ & Defender actions (detector deployment) space \\
$\pi$, $\pi^*$ & Attacker policy, optimal attacker policy \\
$P(s'|s,a)$ & Transition probability from $\boldsymbol{s}$ to $\boldsymbol{s'}$ under action $a$ \\
$P^{\boldsymbol{p}}$ & Transition function modified by the vector $\boldsymbol{p}$; see \eqref{real_transition_under_x}\\
$\boldsymbol{x}(v)$ & Protection status of $v$, $1$ if protected and $0$ otherwise\\
$R(s,a)$ & Immediate reward value \\
$V_2^\pi(s)$ & Attacker value function under policy $\pi$ \\
$\Delta(D)$ & Probability simplex over $D$ \\
$\lambda$, $\lambda_D$ & Attacker infection rate, defender inspection rate \\
$\Pr(A)$ & probability that an event $A$ occurs\\
$T$ & Game termination time \\
\hline
\end{tabular}
\end{table}

\subsection{Motivation}

In the original CTR game, an APT infiltrates a corporate network at an unknown entry point and commits to a single attack path toward a critical asset $v_t$. During the defender's idle period, the attacker progresses stealthily by installing backdoors at each compromised node, advancing by a random number of steps governed by a Poisson distribution with rate $\lambda$. When the defender returns and performs a spot-check, if they inspect a node along the attacker's chosen path where a backdoor has been installed, the backdoor is discovered and removed, effectively ``cutting the rope'', forcing the attacker back to the node immediately preceding the detection point. The formulation requires a simultaneous play and captures a key asymmetry: the attacker must pre-commit to an entire path before the game begins, while the defender must reason about all possible paths simultaneously, without knowledge of the attacker’s entry point or progress. The attacker selects a fixed sequence of nodes $\nu \to v_1 \to v_2 \to \cdots \to v_t$ and must follow this exact path regardless of circumstances.

However, in a scenario where the defender plays first, this rigid pre-commitment fails to capture the tactical flexibility that sophisticated adversaries actually possess during intrusion campaigns. In this case, before acting, the attacker performs reconnaissance to gather information on the network and the defender’s deployments. This helps attackers make routing decisions dynamically as they progress through a network, evaluating their options at each compromised node based on their current position, resources expended, and remaining distance to the target. The single-path commitment model ignores this adaptive decision-making, treating the attacker as if they must blindly follow a predetermined route. We thus extend the CTR framework by modeling the attacker's decision-making as a MDP, where the attacker makes state-dependent routing decisions at each compromised node during the defender’s idle period, rather than committing to a complete path in advance. This adjustment reflects the attacker’s adaptive response to the defender’s initial actions, highlighting the tactical flexibility inherent to advanced intrusion campaigns.

\subsection{Model Overview}

We study a two-player zero-sum security game played on an attack graph $G = (V, E)$ which encodes the interdependencies between system vulnerabilities, with all attack paths eventually leading to a designated target node $v_t$. We assume that the attacker has already infiltrated the network before the game begins, and the defender acts first but does not know the attacker’s exact location. To account for this uncertainty, the defender adopts a conservative view and behaves as if the attacker could be present at any node in $V \setminus F$. Operationally, this is captured by placing a clone at every possible attacker location, with each clone advancing independently along potential attack paths. This abstraction naturally captures lateral movement: switching paths corresponds to advancing a different clone.

\noindent \emph{ - Temporal pattern:} The game proceeds in rounds, which depend on the defender’s activity schedule. We focus on a pattern where the defender acts at random times, with exponentially distributed idle periods; each round corresponds to one such idle interval. This scenario, referred to as an \emph{exponential defense strategy} \cite{van2013flipit}, abstracts away from the execution time of individual actions and assumes that any action can be completed within one unit of time.

\noindent \emph{ - Defender and attacker actions:} The defender selects nodes on which to deploy protection mechanisms. These protections automatically activate upon detecting malicious activity and may include malware removal, disabling vulnerable services, or invalidating attacker knowledge through configuration changes. During the defender’s idle period, the attacker acts: each clone advances along its chosen attack path by several steps $n$, drawn from a fixed distribution $f_{\widetilde{N}}$. The attacker’s effectiveness depends critically on its knowledge of the defender’s most recent actions, motivating the definition of distinct information scenarios following the attacker’s reconnaissance phase.

\noindent\emph{ - Information Scenarios:} Before advancing, the attacker performs a reconnaissance phase to infer the defender’s most recent actions. In realistic attacks, the attacker's knowledge varies with reconnaissance capabilities, intelligence gathering, and the defender’s operational security. A sophisticated APT with extensive pre-attack reconnaissance may estimate defensive deployments accurately, while opportunistic attackers operate with minimal intelligence. This asymmetry affects both attacker routing decisions and defender optimization. We analyze three information settings, illustrated in Figure~\ref{context_CTR} and formalized in Table~\ref{formalized_scenarios}:

\begin{enumerate}
    \item \textbf{Perfect information (Stackelberg-type game):} The attacker learns the defender’s protection exactly after reconnaissance and can optimally plan their path. This scenario could represent a pessimistic case where the defender's operational security has failed, potentially arising from insider threats or information leakage.

    \item \textbf{Imperfect information:} The attacker holds probabilistic belief $\boldsymbol{p}$ about the defender’s action (and bases their decisions on it), while the defender is uncertain about which belief the attacker holds. This setting mirrors real-world operations with partial situational awareness, including cases of insider threats or adversaries with incomplete reconnaissance. Defensive systems must remain effective under such epistemic uncertainty.

    \item \textbf{No information (Blind):} The attacker gains no reliable intelligence about the defender’s action and plans under maximum uncertainty. This could represent the optimistic case where defensive operations are fully opaque.
\end{enumerate}

\vspace{-0.3cm}
\begin{figure}[ht!]
    \centering
    \includegraphics[width=1\linewidth]{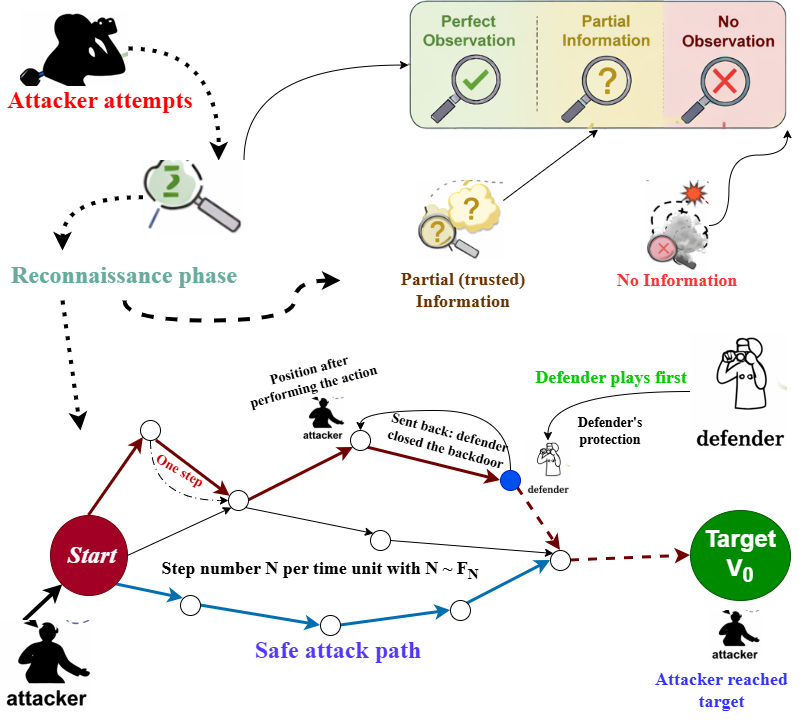}
    \caption{Illustration of the temporal and informational structure of the game}
    \label{context_CTR}
\end{figure}
\vspace{-0.3cm}

\begin{table}[h!]
\centering
\caption{Game-theoretic scenarios studied}
\label{formalized_scenarios}
\renewcommand{\arraystretch}{1.2}
\begin{tabular}{p{1.25cm} p{2.6cm} p{3.8cm}}
\hline
\textbf{Regime} & \textbf{Information structure} & \textbf{Objective and relevance} \\
\hline

\textbf{Blind} 
& No observation of opponent’s action (simultaneous move). 
& Baseline setting with hidden monitoring. Defender chooses a deployment strategy while the attacker solves MDP without observing it. This represents best-case defensive scenario. \\

\hline

\textbf{Stackelberg} 
& Attacker observes the defender’s deployment before acting. 
& Leader–follower model with attacker best-responding to observed defense. This captures worst-case visibility and quantifies defender’s disadvantage under exposure. \\

\hline

\textbf{Dirichlet} 
& Defender cannot completely conceal their actions, and the attacker learns deployment via observed frequencies. 
& Belief-based model where the defender exploits
observability to shape attacker’s belief using a Dirichlet prior, mitigating visibility loss. Results show that properly and strategically managed Dirichlet signaling reduces the Stackelberg disadvantage. \\

\hline
\end{tabular}
\end{table}

\noindent Each scenario induces a different strategic interaction, and for each, we seek the strategies maximizing each player's payoff. 

\noindent \emph{ - Strategies and payoffs:}
 
\begin{itemize}
    \item The defender selects $h$ nodes to protect within the set $A_1:= V \setminus F$ with $F$ the set of targets. When operational constraints limit deployable locations, we can restrict $A_1$ to a subset \(V_{\text{spot}} \subseteq V \setminus F\) (e.g., if the defender does not have access to several nodes in the network). The defender’s action space is therefore  expressed as $ \mathcal{A}_1 = \mathcal{P}_{h}(A_1)$ where $\mathcal{P}_{h}(A_1) $ denotes the $h$-element subsets of $A_1$. That means $\boldsymbol{x} \in \mathcal{A}_1$ denotes a protection strategy consisting of $h$ elements, with $h$ representing the defender’s resource budget. To reflect practical enterprise settings, we restrict attention to pure strategies, enabling clear and implementable decisions. At each active period, the defender deterministically selects a deployment vector \(\boldsymbol{x} \in \mathcal{A}_1\) formally the defined as $$\mathcal{A}_1 = \left\{\boldsymbol{x} \in \{0, 1\}^{|A_1|}, \sum_{\substack{j \in A_1}} x_j = h\right\}$$
    Each pure strategy \(\boldsymbol{x} \in \{0,1\}^{|A_1|}\) is the standard basis vector with $x_j = 1$ if $j$ is protected and $0$ elsewhere.

\item The attacker’s action space $\mathcal{A}_2$ consists of all attack paths in $G$ leading to a target. Each clone starts from a distinct node in $V \setminus F$ and follows one of these paths toward the target. The attacker may adopt a mixed strategy $\textbf{y} \in \Delta(\mathcal{A}_2)$, representing a probability distribution over attack paths. The payoff of the game is the probability that at least one avatar reaches a target $v_t$ during a single defender's idle period. This probability depends critically on the distribution of $N$, the number of steps the attacker can take while the defender is inactive. The attacker seeks to maximize this probability, while the defender seeks to minimize it. Formally,
$$u_{\text{attacker}} = -u_{\text{defender}} = \Pr(\text{attacker reaches } v_t).$$

\end{itemize}

\subsection{Network model and attack structure}

For each scenario that we evaluate in the following, the attacker plays on an oriented graph $G = (V, E)$ with the elements defined as follows,
\begin{itemize}
    \item $V = \{v_1, \ldots, v_n\}$ is the set of nodes or vulnerabilities
    \item $E \subseteq V \times V$ is the set of edges, i.e., exploitation transitions
    \item $F =\{v_t^{1}, \dots, v_t^{n}\}\subset V$ is the set of attacker's targets
    \item $V_{\text{entry}} \subseteq V \setminus F$ is the set of attacker entry points
    \item $\boldsymbol{s}_{\text{sink}}$ is the absorbing state when a detection occurs, while  $s_T$ is a terminal state when the attacker successfully reaches a target,  
    \item $c_t \leqslant c_{\max} = \max d(i,j)$ is the number of edges crossed by the attacker in the graph $G$, at a given time $t$.
\end{itemize}

\begin{definition}[Attack MDP structure] The attack process phase is modeled as a MDP $\textbf{M}_A = (S, A, P, \nu, F, R)$ where:
\begin{itemize}
\item $S = \left(\underbrace{V}_{\text{current position}}  \times \underbrace{\{0, 1, 2, \ldots, c_{\max}\}}_{\text{steps taken}}\right) \cup \quad \{\boldsymbol{s}_{\text{sink}}\}$ is the set of possible states of the system. In other words, a state in our framework is represented by a tuple containing the position and the number of edges traversed to reach that position. That is, a given absorbing state can be represented as $\boldsymbol{s}_{\text{sink}} = (v_i, c_i)$ (i.e., $\boldsymbol{s}_{\text{sink}}$ represents terminal state in which the attacker is detected) while a target state is defined as $\boldsymbol{s_T} = (v_T, c_T)$ (corresponding to a terminal state in which a target is reached).

\item $A(\boldsymbol{s}) = \{a_e: e = (p, u) \in E\} \ with\ \boldsymbol{s} \in S \setminus \{\boldsymbol{s}_{\text{sink}}, \boldsymbol{s_T}\}$ is the set of possible actions of the attacker at each state. Here $a_e$ means ``attempt exploitation along edge $e$", and $A(\boldsymbol{s}_{\text{sink}}) = A(\boldsymbol{s_T}) = \{\text{}\}$. Let's say that there is no further action after reaching an absorbing state (or a terminal state), the step game is then over. 

\item \(P: S \times A \to \Delta(S)\) is the transition function. For a given action \(a = a_e\) where \(e = (p,u)\), i.e., $\boldsymbol{s} = (p, c_t)$, the transition probability is defined as:

{\footnotesize{\begin{align}\label{Main_transition_function}
    P(\boldsymbol{s'} \mid \boldsymbol{s}, a_e) =
\begin{cases}
\Pr\bigl(\widetilde{N} \geqslant c_t + 1 \mid \widetilde{N} \geqslant c_t \bigr), \ \text{if } \boldsymbol{s'} = (u, c_t + 1)& \\[2pt]
1 - \Pr\bigl(\widetilde{N} \geqslant c_t + 1 \mid \widetilde{N} \geqslant c_t \bigr),\  \text{if }\ \boldsymbol{s'} = (p, c_t),& \\[2pt]
0, \qquad \qquad \qquad \qquad \qquad \qquad \qquad \ \text{otherwise,}&
\end{cases}
\end{align}}
}\nobreak
\noindent where \(\widetilde{N} \in \{0, 1, 2, \cdots\}\) is the random number of steps undertaken by the clone with probability mass function \(f_{\widetilde{N}}(n) = \Pr(\widetilde{N} = n)\), and \(c_t\) denotes the number of edges crossed by the attacker until the current position.

\noindent In other words, the attacker moves from node \(p\) (state \(\boldsymbol{s}\)) to node \(u\) (state \(\boldsymbol{s'}\)) with probability equal to the conditional probability that the number of steps available in this time interval exceeds the distance already covered by at least one step, i.e., $\Pr\bigl(\widetilde{N} \geqslant c_t + 1 \mid \widetilde{N} \geqslant c_t \bigr) = \dfrac{\sum_{n = c_t + 1}^{\infty} f_{\widetilde{N}}(n)}{\sum_{n = c_t}^{\infty} f_{\widetilde{N}}(n)}.$ Otherwise, the attacker remains in \(p\).

\item $\boldsymbol{\nu}=(\nu_{\boldsymbol{s}})_{\boldsymbol{s}\in S}$ is the initial state distribution of the attacker.

\item $R: S \times A \to \mathbb{R}$ is the attacker's reward function. Since the attacker's objective is to maximize the probability of reaching the target we define,
\begin{align*}
R(s, a) = \begin{cases}
1 & \text{if } s \in F \times \{c_t\}_{t \in \mathbb{N}} \text{ (target reached)} \\
0 & \text{otherwise}
\end{cases}
\end{align*}
\end{itemize}
\end{definition}

\noindent Note that the graph does not contain any cycles in our framework, and therefore the game always ends. Under this consideration, the value function $V^{\pi}_2(\boldsymbol{s_0})$ represents the probability that the attacker starting from state $\boldsymbol{s_0} = (v, 0)$ will eventually reach a target $v_t$ under policy $\pi$, directly capturing the attacker's objective: \textit{maximizing his chance of reaching a critical asset}.

\noindent\textbf{Attacker's objective formalization:} The attacker's value function under policy $\pi$ is defined, for any state $s = (v_i, c_i) \in S$, as
$V^{\pi}_2(\boldsymbol{s}) = \mathrm{E}_{\pi}\!\left[\displaystyle\sum_{k=0}^{\infty} R(\boldsymbol{s_k}, a_k) \,\Big|\, \boldsymbol{s_0} = \boldsymbol{s} \right]$.
The optimal value function satisfies the Bellman equation:
\begin{align*}
V^*_2(\boldsymbol{s}) 
= \max_{a \in \mathcal{A}_2(\boldsymbol{s})} 
\left(
R(\boldsymbol{s},a) + \sum_{\boldsymbol{s'} \in S} P(\boldsymbol{s'} \mid \boldsymbol{s},a)\, V^*_2(\boldsymbol{s'})
\right),
\end{align*}
and the optimal policy is given by
\begin{align*}
\pi^*(\boldsymbol{s}) 
\in &\argmax_{a \in \mathcal{A}_2(\boldsymbol{s})}
\left(
R(\boldsymbol{s},a) + \sum_{\boldsymbol{s'} \in S} P(\boldsymbol{s'} \mid \boldsymbol{s},a)\, V^*_2(\boldsymbol{s'})
\right),\\
& \forall \boldsymbol{s} = (v_i, c_i) \in S.
\end{align*}

\begin{lemma}
In our model, assuming that the attacker follows a policy $\pi$ generating the path \ $ \pi:\; p_0 = v_0 \to p_1 \to p_2 \to \cdots \to p_{L_\pi} = v_t$, of length $L_{\pi}$ (number of edges from $v_0$ to $v_t$ when following $\pi$), the value of the MDP at the initial state $\boldsymbol{s_0} = (v, 0)$ equals the probability that the random horizon $\widetilde{N}$ admits at least $L_\pi$ steps, that is, $V_2^{\pi}(\boldsymbol{s_0}) \;=\; \Pr(\widetilde{N} \geqslant L_{\pi})$.

\end{lemma}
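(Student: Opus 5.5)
Along the path generated by $\pi$, the attacker's trajectory from $\boldsymbol{s_0}=(v_0,0)$ is determined by the transition function~\eqref{Main_transition_function}. I would prove the claim by decomposing the event of reaching the target into the chain of successful edge traversals and multiplying the conditional probabilities, which telescope. Since $R$ pays $1$ only on entering a target state and the graph is acyclic, $V_2^{\pi}(\boldsymbol{s_0})$ is exactly the probability that the process visits $(v_t,L_\pi)$.

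\textbf{Step 1: a path-restricted recursion.} Set $W_k := V_2^{\pi}((p_k,k))$ for $k=0,\dots,L_\pi$, with $W_{L_\pi}=1$ because the target is reached. I would interpret the ``otherwise the attacker remains in $p$'' branch of~\eqref{Main_transition_function} as the event that the step budget $\widetilde{N}$ is exhausted within the current idle period. In that case the round ends without reaching $v_t$, so that branch contributes reward $0$ (it is effectively absorbing for the single-period payoff). The Bellman equation under the fixed policy $\pi$, with action $a_{(p_k,p_{k+1})}$, then gives
\begin{align*}
W_k = \Pr\bigl(\widetilde{N}\geqslant k+1 \mid \widetilde{N}\geqslant k\bigr)\, W_{k+1}, \qquad k=0,\dots,L_\pi-1 .
\end{align*}

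\textbf{Step 2: unroll and telescope.} Backward induction from $k=L_\pi$ yields
\begin{align*}
W_0=\prod_{k=0}^{L_\pi-1}\frac{\Pr(\widetilde{N}\geqslant k+1)}{\Pr(\widetilde{N}\geqslant k)} = \frac{\Pr(\widetilde{N}\geqslant L_\pi)}{\Pr(\widetilde{N}\geqslant 0)} = \Pr(\widetilde{N}\geqslant L_\pi),
\end{align*}
using $\Pr(\widetilde{N}\geqslant 0)=1$. Equivalently, by the chain rule the events $\{\widetilde{N}\geqslant k\}$ are nested, so the product is the probability of their intersection, $\{\widetilde{N}\geqslant L_\pi\}$.

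\textbf{Main obstacle.} The main obstacle is the treatment of the ``stay'' transition. Read literally, $(p,c_t)$ is a non-absorbing state, and repeated attempts would give the attacker value $1$ whenever the success probability is positive. I would therefore state explicitly that failing to advance means $\widetilde{N}=c_t$, i.e., the realized budget is used up, so the clone halts for this period and earns no further reward. This reading matches the semantics of $\widetilde{N}$ as the number of steps taken in one idle interval.

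A secondary point is ensuring that the denominators in the product are nonzero. If $\Pr(\widetilde{N}\geqslant k)=0$ for some $k\le L_\pi$, then both sides of the claim equal $0$ by convention, so the identity still holds.
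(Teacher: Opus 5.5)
Your proposal is correct and follows essentially the same route as the paper. Both arguments write a backward Bellman recursion along the path generated by $\pi$, treat the failure branch (step budget exhausted) as a zero-reward outcome, and telescope the conditional survival ratios to $\Pr(\widetilde{N}\geqslant L_\pi)$. Your explicit remarks on the literal self-loop reading of the ``stay'' transition and on vanishing denominators spell out two points the paper handles only implicitly: the first through its interpretive remark about failure, the second by assuming $f_{\widetilde{N}}(c)>0$.
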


\begin{proof}
For $c\in \mathbb{N}$ we consider $f_{\widetilde{N}}(c)\;=\;\Pr(\widetilde{N} \geqslant c) = \displaystyle\sum_{k=c}^{\infty}\Pr(\widetilde{N} = k)$ and then $f_{\widetilde{N}}(0) = 1$. For $c\geqslant 0$ with $f_{\widetilde{N}}(c) >0$ define the conditional step evolution as,
\[
\alpha_{c_{t+1}}:=\Pr(\widetilde{N} \geqslant c_{t} + 1 \mid \widetilde{N} \geqslant c_{t})=\dfrac{f_{\widetilde{N}}(c_{t} + 1)}{f_{\widetilde{N}}(c_t)}.
\]\noindent
In our framework, a transition attempt from state $(p, c_t)$ along the chosen edge $(p \to u)$ succeeds (i.e., the attacker actually advances to $u$) with probability $\alpha_{c_{t+1}}$ and fails (due to time expiration) with probability $1-\alpha_{c_{t+1}}$. We interpret the failure outcome as producing no further chance to reach $v_t$ in this round, matching the intended post-window semantics: if the window ends before the $c_{t+1}^{th}$ step, the attacker cannot reach $v_t$ in the round.\\  

\noindent \emph{Base step:} Consider the last move on the path, from $p_{L_\pi-1}$ to $p_{L_\pi}=v_t$.
At state $p_{L_\pi-1}$ the next-step conditional probability is $\alpha_{L_\pi}$. If the step is allowed, the attacker reaches $v_t$ and obtains reward $1$; otherwise, the round ends with a reward of $0$, i.e.,
$V_2^{\pi}\big(p_{L_\pi-1}, L_\pi-1\big) \;=\; \alpha_{L_\pi}\cdot 1 + (1-\alpha_{L_\pi})\cdot 0
\;=\; \alpha_{L_\pi}$.\\

\noindent \emph{Inductive step:}
Fix an index $j$ with $0\leqslant j\leqslant L_\pi-1$. From state $\boldsymbol{s_j} =  (p_j, j)$ the policy $\pi$ defines the unique next-state $\boldsymbol{s_{j + 1}} =(p_{j+1}, j+1)$. The Bellman equation gives

{\footnotesize{\begin{align*}
    V_2^{\pi}\big(\boldsymbol{s_{j}}\big)
\;=\; \alpha_{j+1}\cdot V_2^{\pi}\big(\boldsymbol{s_{j + 1}}\big) + (1-\alpha_{j+1})\cdot 0 \;=\; \alpha_{j+1}\,V_2^{\pi}\big(\boldsymbol{s_{j + 1}}\big).
\end{align*}}}

\noindent Applying this relation repeatedly from $j=L_\pi-1$ back to $j=0$ yields
\begin{align*}
    V_2^{\pi}\big(\boldsymbol{s_{0}}\big) = \prod_{k=1}^{L_\pi} \alpha_k = \prod_{k=1}^{L_\pi} \dfrac{f_{\widetilde{N}}(k)}{f_{\widetilde{N}}(k-1)} = \dfrac{f_{\widetilde{N}}(L_\pi)}{f_{\widetilde{N}}(0)} \;=\; f_{\widetilde{N}}(L_\pi)
\end{align*}
By definition $f_{\widetilde{N}}(L_\pi)=\Pr(\widetilde{N}\geqslant L_\pi)$, which proves the lemma.
\end{proof}

\section{Perfect Information (Stackelberg Game)}

\subsection{Problem formulation}

In the Stackelberg setting, the defender first commits to a strategy $\boldsymbol{x} \in \mathcal{A}_1$, the attacker then \emph{observes} $\boldsymbol{x}$ with certainty, and computes an optimal policy $\pi^{\text{Stack}}(\boldsymbol{x}) $ for the modified MDP $\textbf{M}_A(\boldsymbol{x}) $; the attack then proceeds according to $\pi^{\text{Stack}}(\boldsymbol{x}) $. Given the attacker's perfect knowledge of $\boldsymbol{x}$, the transition function is modified to:

{\small{\begin{align}\label{real_transition_under_x}
   P^{\boldsymbol{x}}(\boldsymbol{s'} \mid \boldsymbol{s} = (v,c), a) = \begin{cases}
1 & \text{if } \boldsymbol{x}(v) = 1 \text{ and }  \boldsymbol{s'} =  \boldsymbol{s}_{\text{sink}} \\
P(\boldsymbol{s'} \mid \boldsymbol{s}, a) & \text{otherwise}
\end{cases} 
\end{align}}}

\noindent That is, the attacker knows with certainty protected nodes, and routes accordingly. Thus, if the attacker passes through a protected node, detection occurs with certainty, and the process transitions immediately to the sink state. Otherwise, transitions follow the base dynamics $P(\boldsymbol{s'} \mid \boldsymbol{s}, a)$. Under these transition dynamics:
\begin{itemize}
    \item The sink state $\boldsymbol{s}_{\text{sink}}$ is an \emph{absorbing state} and represents successful detection, and the game ends (for the attacker).
    \item Upon detection, the attacker is reset to the node immediately preceding the protected node.
\end{itemize}

The modified transition function $P^{\boldsymbol{x}}$captures how the defender’s placement decision alters the attacker’s decision environment (i.e., success probability). Actually,
\begin{itemize}
    \item \emph{The base transition $P(\boldsymbol{s'} \mid \boldsymbol{s}, a)$}: models physical and temporal feasibility, i.e., whether sufficient steps remain to move from state $\boldsymbol{s}$ to $\boldsymbol{s'}$ via action $a$, independently of defensive deployments.
    
    \item \emph{The modified transition $P^{\boldsymbol{x}}(\boldsymbol{s'} \mid \boldsymbol{s} = (v, c_t), a)$}: incorporates strategic risk. If the current node is protected  (i.e., $x(v) = 1$), occupying it results in certain detection, regardless of movement feasibility.
\end{itemize}
\noindent Hence, the defender’s action introduces a structural constraint into the attacker’s MDP by transforming specific states into detection-triggering states. 
The value function $V^{\pi}_2(\boldsymbol{s_0} = (v_0, 0)) $ computed under $P^{\boldsymbol{x}}$ represents the probability that an attacker starting from entry point $v_0$ and following policy $\pi$ reaches a target $v_t$ without detection. This is precisely the quantity minimized by the defender and maximized by the attacker, leading to our equilibrium characterization. An equilibrium $(\boldsymbol{x^*}, \pi^*)$ represents a strategic configuration satisfying:
\begin{itemize}
    \item $\boldsymbol{x^*}$ minimizes the attacker’s optimal success probability; minimax strategy against the worst-case (most capable) attacker policy
    \item $\pi^*$ is a best-response to $\boldsymbol{x^*}$, optimizing their routing policy given anticipation of the defender's minimax choice
    \item Neither player can improve their payoff through unilateral deviation.
\end{itemize}
\noindent 
This Stackelberg formulation reflects the operational setting in which the defender commits first, and the attacker adapts optimally after observing the deployment.

\subsection{Stackelberg Equilibrium}

In our zero-sum Stackelberg setting, the defender's optimal strategy $x$ can be obtained by solving the following optimization problem:
\begin{align}
    \boldsymbol{x^*} \in \underset{\boldsymbol{y} \in \mathcal{A}_1}{\argmin} \max_{\pi} V_{2}^{\pi}(\boldsymbol{\nu}, \boldsymbol{y}); \  V_{2}^{\pi}(\boldsymbol{\nu}, \boldsymbol{y}) = \displaystyle \sum_{\boldsymbol{s} \in S} \nu_{\boldsymbol{s}} V_{2}^{\pi}(s, \boldsymbol{y})
    \label{MILP_problem}
\end{align}
which in fact is equivalent to the following bi-level optimization problem,
\begin{align*}
    \displaystyle\min_{\boldsymbol{y} \in \mathcal{A}_1}\quad & V_2^{\pi^*}(\boldsymbol{\nu}, \boldsymbol{y}) \\
   s.t.\quad  & \pi^{*} \in \argmax_{\pi \in \Pi} V_2^{\pi}(\boldsymbol{\nu}, \boldsymbol{y})
\end{align*}\noindent
where $\boldsymbol{\nu} = \left(\nu_{\boldsymbol{s}}\right)_{\boldsymbol{s} \in S}$ is selected to be the initial distribution over the states $S$, $V_{2}^{\pi}(\mathbf{s}; \boldsymbol{y})$ is attacker's success probability under policy $\pi$ in the modified MDP $\textbf{M}_A(\boldsymbol{x})$. To solve the attacker's MDP, let's consider a decision vector $\boldsymbol{v} = (v_{\boldsymbol{s}})_{\boldsymbol{s} \in S}$, where $v_{\boldsymbol{s}}$ is an upper bound on $v_{\boldsymbol{s}}^\ast$ for each $\boldsymbol{s} \in S$, with $\boldsymbol{v^*} = \left(v_{\boldsymbol{s}}^*\right)_{\boldsymbol{s} \in S}$ is the optimal value vector describing the probability of reaching target from state $\boldsymbol{s}$ under the optimal attack policy. The attacker’s value can then be computed by the following linear program:
\begin{align}
\min_{\boldsymbol{v}} \quad & \sum_{\boldsymbol{s} \in S} \nu_{\boldsymbol{s}} v_{\boldsymbol{s}} \label{eq:lp_obj} \\
\text{s.t.} \quad 
& v_{\boldsymbol{s}} \geqslant \sum_{s' \in S} P^{\boldsymbol{x}}(\boldsymbol{s'} \mid \boldsymbol{s}, a)\, v_{\boldsymbol{s'}}, 
&& \forall a \in \mathcal{A}_2,\ \forall \boldsymbol{s} \in S, \label{eq:lp_bellman}\\
& v_{\boldsymbol{s}} = 0, 
&& \forall s \in \{\boldsymbol{s}_{\text{sink}}\}, \label{eq:lp_sink}\\
& v_{\boldsymbol{s}} = 1, 
&& \forall \boldsymbol{s} \in F, \label{eq:lp_terminal}\\
& v_{\boldsymbol{s}} \geqslant 0, 
&& \forall \boldsymbol{s} \in \boldsymbol{S}. \label{eq:lp_nonneg}
\end{align}\noindent
  
\noindent It is shown in \cite{de2003linear} that any vector $\boldsymbol{v}$ satisfying ~\eqref{eq:lp_bellman} is an upper bound on the optimal value  $\boldsymbol{v^\ast}$, and equality is achieved by the optimal attack policy $\pi^*$ via the Bellman equation. The problem~\eqref{MILP_problem} is known to be a strong NP-hard problem \cite{hansen1992new}; we thus formulate the optimal deployment problem as a Mixed-Integer Linear Programming (MILP) problem. For this purpose, observe that the transition function $P^{\boldsymbol{x}}$ of the modified MDP $\textbf{M}_A(\boldsymbol{x})$ leads to the following equality 
\begin{align*}
    \sum_{\boldsymbol{s'}} & P^{\boldsymbol{x}}(\boldsymbol{s'} \mid \boldsymbol{s} = (v, c_t) ,a) v_{\boldsymbol{s'}} = \sum_{\boldsymbol{s'}} P(\boldsymbol{s'} \mid \boldsymbol{s} ,a)\, v_{\boldsymbol{s'}} (1 - \boldsymbol{x}(v) ) \\
    & + \boldsymbol{x}(v)  V_2^{\pi}(\boldsymbol{s}_{\text{sink}}, \boldsymbol{x})  = \sum_{\boldsymbol{s'}} P(\boldsymbol{s'}\mid \boldsymbol{s} ,a)\, v_{\boldsymbol{s'}} (1 - \boldsymbol{x}(v) )
\end{align*}\noindent
since we know that $V_2^{\pi}(\boldsymbol{s}_{\text{sink}}, \boldsymbol{x})  = 0.$ The optimization problem~\eqref{MILP_problem} can therefore be expressed as,
\begin{align}
&\min_{\boldsymbol{x}\in \mathcal{A}_1, \boldsymbol{v}} \qquad
     \sum_{\boldsymbol{s} \in S} \nu_{\boldsymbol{s}}\, v_{\boldsymbol{s}}  \label{eq:attacker-mdp-obj} \qquad \text{s.t.} \\[0.4em]
& 
 v_{\boldsymbol{s}} \geqslant \sum_{\boldsymbol{s'}\in S} 
      P\left(\boldsymbol{s'}\mid \boldsymbol{s} ,a \right)\, v_{\boldsymbol{s'}}\,\bigl(1 - \boldsymbol{x}(v) \bigr)
      \quad \forall a\in \mathcal{A}_2, \forall \boldsymbol{s} \in S,
      \label{eq:attacker-mdp-bellman}\\[-0.3em]
      & \sum_{u \in V_{\text{spot}}} \boldsymbol{x}(u) \leqslant h,\quad \boldsymbol{x}(u) \in \{0,1\} \qquad \quad \forall u \in V_{\text{spot}},  \label{dirichlet_single}\\ 
    & \text{and with constraints \eqref{eq:lp_sink}, \eqref{eq:lp_terminal}, and \eqref{eq:lp_nonneg}.} \notag
\end{align}

\noindent The Bellman constraint~\eqref{eq:attacker-mdp-bellman} contains a non-linear term $v_{\boldsymbol{s'}}\,\bigl(1 - \boldsymbol{x}(v) \bigr)$. Introducing auxiliary variables 
\begin{equation}
 W_{2,\boldsymbol{s}, \boldsymbol{s'}} =  v_{\boldsymbol{s'}}\,\bigl(1 - \boldsymbol{x}(v) \bigr) = 
\begin{cases}
    v_{\boldsymbol{s'}} & \text{if } \boldsymbol{x}(v) = 0; \\[0.3em]
    0 & \text{if } \boldsymbol{x}(v) = 1,
\end{cases}
\label{eq:w-definition}
\end{equation}
the constraint \eqref{eq:attacker-mdp-bellman} is linearized and equivalent to \begin{align}
    v_{\boldsymbol{s}} \geqslant \displaystyle\sum_{\boldsymbol{s'} \in S} P(\boldsymbol{s'} \mid \boldsymbol{s}, a)\, W_{2,\boldsymbol{s},\boldsymbol{s'}},\ \forall \boldsymbol{s} \in S,\; \forall a \in \mathcal{A}_2
    \label{eq:milp-equivalent}
\end{align}.

\noindent Using a big-$M$ method~\cite{griva2009linear}, the definition \eqref{eq:w-definition} can be expressed through the following inequalities:
\begin{subequations}
\label{eq:bigM}
\begin{align}
    W_{2,\boldsymbol{s},\boldsymbol{s'}}  \leqslant M \, (1 - \boldsymbol{x}(v) ) \quad &
    W_{2,\boldsymbol{s},\boldsymbol{s'}} \geqslant m \, (1 - \boldsymbol{x}(v)) \label{eq:bigM-b}\\
    W_{2,\boldsymbol{s},\boldsymbol{s'}} - v_{\boldsymbol{s'}}  \leqslant M\, \boldsymbol{x}(v) \quad &
    W_{2,\boldsymbol{s},\boldsymbol{s'}} - v_{\boldsymbol{s'}}  \geqslant m\, \boldsymbol{x}(v) \label{eq:bigM-d}
\end{align}
\end{subequations}
where $M$ and $m$ are constants chosen sufficiently large and small, respectively. $M = 1$ and $m = -1$ is sufficient in our setting, since when $x(v) =1$, constraints \eqref{eq:bigM-b} enforce $W_{2,\boldsymbol{s},\boldsymbol{s'}} = 0$, while \eqref{eq:bigM-d} become non-binding; and similarly for the case $x(v) = 0$. Finally, the MILP formulation for the defender’s Stackelberg problem becomes:
\begin{align}
\min_{\boldsymbol{x} \in \mathcal{A}_1, \boldsymbol{v}} \ & \sum_{\boldsymbol{s} \in S} \nu_{\boldsymbol{s}}\, v_{\boldsymbol{s}} \label{eq:milp-objective}\\[0.3em]
\text{s.t.} \qquad 
    &\text{constraints } \eqref{eq:lp_sink}, \eqref{eq:lp_terminal}, \eqref{eq:lp_nonneg}, \eqref{dirichlet_single}, \eqref{eq:milp-equivalent}, \eqref{eq:bigM} \label{eq:milp-constraints}\\[0.4em]
    & W_{2,\boldsymbol{s},\boldsymbol{s'}} \geqslant 0, 
    \qquad \forall \boldsymbol{s} \in S,\ \forall a \in \mathcal{A}_2,\ \forall \boldsymbol{s'} \in S \label{eq:milp-nonneg}
\end{align}

\section{Blind Attacker with No Information} \label{blind_model}

In the no-information setting, the attacker has no knowledge about the defender's strategy and must plan under maximum uncertainty. The defender commits a deployment $\boldsymbol{x} \in \mathcal{A}_1$ without revealing it, while the attacker adopts a \emph{uniform belief} over all possible defender actions and computes their routing policy accordingly. The attack proceeds, and detection occurs if the realized path intersects the actual deployed $\boldsymbol{x}$.

\subsection{Uniform Belief Model}

Assuming no information, the attacker adopts a uniform belief over defender actions, $\boldsymbol{p^{\text{Blind}}}(\boldsymbol{x})  = \dfrac{1}{|\mathcal{A}_1|}, \quad \forall \boldsymbol{x} \in \mathcal{A}_1.$ Under this belief, the attacker perceives transition dynamics from a given state $\boldsymbol{s} = (v,c)$ as,

{\small{\[
\boldsymbol{p^{\text{Blind}}}(\boldsymbol{s'} \mid \boldsymbol{s}, a) = \begin{cases}
\boldsymbol{p^{\text{Blind}}}(v), & \text{if } \boldsymbol{s'} = \boldsymbol{s}_{\text{sink}} \\
\bigl(1 - \boldsymbol{p^{\text{Blind}}}(v)\bigr) \cdot P(\boldsymbol{s'} \mid \boldsymbol{s}, a),  & \text{if } \boldsymbol{s'} \neq \boldsymbol{s}_{\text{sink}}
\end{cases}
\]}}
\noindent where ${\small{\boldsymbol{p^{\text{Blind}}}(v) = \dfrac{\displaystyle\sum_{\substack{\boldsymbol{x} \in \mathcal{A}_1}} \boldsymbol{x}(v)}{{\mid \mathcal{A}_1\mid}} = \dfrac{h}{|V_{\text{spot}}|} = \dfrac{\dbinom{|V_{\text{spot}}| - 1}{N - 1}}{\dbinom{|V_{\text{spot}}|}{N}}}}$, since once the node to protect is chosen, just search among the accessible nodes for the other ones to protect. At each node $v$, the attacker believes there is a $\boldsymbol{p^{\text{Blind}}}(v)$ probability of detection. The attacker plans their route to maximize expected success under this probabilistic detection model, rather than deterministic detection as in the Stackelberg case, and the uninformed attacker equilibrium is defined as follows.

\begin{definition}
\label{def:Blind}
A uninformed or blind attacker equilibrium is a strategy profile $(\boldsymbol{x^{\text{N}}}, \pi^{\text{N}})$ satisfying:

\noindent \emph{(i) Attacker's best-response under uniform belief:} 
\[
\pi^{\text{N}} \in \argmax_\pi V_2^\pi(\boldsymbol{\nu}; \boldsymbol{p^{\text{Blind}}}),
\]
where $V_2^\pi$ is computed on the belief-modified MDP with transitions $\boldsymbol{p^{\text{Blind}}}$.

\emph{(ii) Defender's best-response:} Given the attacker's policy $\pi^{\text{N}}$, the defender evaluates the \emph{actual} attacker success probability under deterministic detection:
\[
\boldsymbol{x^{\text{N}}} \in \argmin_{\boldsymbol{x} \in \mathcal{A}_1} V_2^{\pi^{\text{N}}}(\boldsymbol{\nu}; \boldsymbol{x}),
\]
where $V_2^{\pi^{\text{N}}}(\boldsymbol{\nu}; \boldsymbol{x}) $ is the attacker's success probability when following fixed policy $\pi^{\text{N}}$ against the defender's deterministic deployment $\boldsymbol{x}$.
\end{definition}

\subsection{Blind attacker equilibrium computation}

The blind attacker equilibrium is computed in two steps:

- \noindent \emph{Computation of the attacker's policy under uniform belief:} The attacker’s optimal value vector $\boldsymbol{v^{\text{Blind}}} = (v_{\boldsymbol{s}}^{\text{Blind}})_{\boldsymbol{s} \in S}$ is computed by solving the following LP:
\begin{subequations}
\label{eq:lp_Blind}
\begin{align}
&\min_{\boldsymbol{v}} \quad \sum_{\boldsymbol{s} \in S} \nu_{\boldsymbol{s}}\, v_{\boldsymbol{s}} \quad  \text{s.t.}  \label{eq:Blind_lp_obj} \\
& v_{\boldsymbol{s}} \geqslant\sum_{\boldsymbol{s'} \in S} \boldsymbol{p^{\text{Blind}}}(\boldsymbol{s'} \mid \boldsymbol{s}, a)\, v_{\boldsymbol{s'}},\ \forall \boldsymbol{s} \in S, \forall a \in \mathcal{A}_2, \label{eq:Blind_lp_bellman} \\
& v_{\boldsymbol{s}} = 0,\qquad \qquad \qquad \qquad \qquad \qquad  \forall \boldsymbol{s} \in \{\boldsymbol{s}_{\text{sink}}\}, \label{eq:Blind_lp_sink} \\
& v_{\boldsymbol{s}} = 1,\qquad \qquad \qquad \qquad \qquad \qquad  \forall \boldsymbol{s} \in F, \label{eq:Blind_lp_terminal} \\
& v_{\boldsymbol{s}} \geqslant 0,\qquad \qquad \qquad \qquad \qquad \qquad \qquad  \forall \boldsymbol{s} \in S. \label{eq:Blind_lp_nonneg}
\end{align}
\end{subequations}
\noindent where the transition function $\boldsymbol{p^{\text{Blind}}}$ is expanded as:
\begin{align*}
 &\boldsymbol{p^{\text{Blind}}}(\boldsymbol{s'} \mid \boldsymbol{s} = (v,c), a) 
= \bigl(1 - \boldsymbol{p^{\text{Blind}}}(v)\bigr) \cdot P(\boldsymbol{s'} \mid \boldsymbol{s}, a) \\
& +\ \boldsymbol{p^{\text{Blind}}}(v) \cdot \mathbb{I}_{s' = \boldsymbol{s}_{\text{sink}}} \\
&= \dfrac{h}{|V_{\text{spot}}|} \cdot \mathbb{I}_{s' = \boldsymbol{s}_{\text{sink}}} + \left(1 -\dfrac{h}{|V_{\text{spot}}|}\right) \cdot P(\boldsymbol{s'} \mid \boldsymbol{s}, a)
\end{align*}

\noindent The attack policy $\pi^{\text{N}}$ is recovered from the Bellman equation:
{\small{\[
\pi^{\text{N}}(s) \in \argmax_{a \in \mathcal{A}_1(s)} \left\{R(s, a) + \sum_{s'} \boldsymbol{p^{\text{Blind}}}(\boldsymbol{s'} \mid \boldsymbol{s}, a)\, v_{\boldsymbol{s'}}^{\text{Blind}}\right\}
\]}}

\noindent - \emph{Defender's optimal deployment strategy (via enumeration):} The defender's objective is the deployments strategy $\boldsymbol{x}$ that performs best against the ``blind-optimized'' attacker policy. That is, given the fixed policy $\pi^{\text{N}}$, the defender evaluates the \emph{actual} attacker success probability for each candidate $x \in \mathcal{A}_1$. For each $x \in \mathcal{A}_1$, the defender constructs the modified MDP $\textbf{M}_A(\boldsymbol{x}) $ with transition function $P^{\boldsymbol{x}}$ and evaluates the fixed policy $\pi^{\text{N}}$ in this MDP to obtain $V_2^{\pi^{\text{N}}}(\boldsymbol{\nu}, \boldsymbol{x})$ using,
\begin{align*}
& V_2^{\pi^{\text{N}}}(\boldsymbol{s}, \boldsymbol{x})  = \sum_{s'} P^{\boldsymbol{x}}(\boldsymbol{s'} \mid \boldsymbol{s}, \pi^{\text{N}}(\boldsymbol{s})) V_2^{\pi^{\text{N}}}(\boldsymbol{s'}; \boldsymbol{x})  \\
& \text{i.e., } V_2^{\pi^{\text{N}}}(\boldsymbol{\nu}, \boldsymbol{x}) = \sum_{\boldsymbol{s} \in S} \nu_{\boldsymbol{s}} V_2^{\pi^{\text{N}}}(\boldsymbol{s}, \boldsymbol{x}) 
\end{align*}

\noindent The defender then picks a minimizer $\boldsymbol{x^{\text{N}}} \in \displaystyle \argmin_{\boldsymbol{x} \in \mathcal{A}_1} V_2^{\pi^{\text{N}}}(\boldsymbol{\nu})$.

\noindent \emph{Computational complexity:}  Since $|\mathcal{A}_1| = \binom{|V_{\text{spot}}|}{N}$ grows combinatorially, enumeration is tractable for some $N$ and moderate $|V_{\text{spot}}|$. The complexity associated with this computation is then given by $O(|\mathcal{A}_1| \cdot |S|^2)$ while LP consumes $O(|\mathcal{A}_2| \cdot |S|^2)$ to solve~\eqref{eq:lp_Blind}.

\section{Belief-based defense under Dirichlet uncertainty}

In realistic attack scenarios, adversaries differ in intelligence capabilities. Some attackers operate with diffuse, low-quality intelligence resembling near-uniform beliefs, while sophisticated APTs possess concentrated beliefs formed through sustained reconnaissance. Consequently, the defender faces epistemic uncertainty regarding the attacker’s knowledge level. A robust defensive strategy must therefore perform well against a spectrum of attacker beliefs. In this setting, the attacker forms a belief about the defender’s deployment and optimizes accordingly, while the defender does not directly observe this belief. Importantly, when complete secrecy is infeasible, the defender is not obligated to accept the worst-case Stackelberg scenario or a scenario in which they are completely unaware of the attacker's methodology. Instead, the defender may strategically manipulate the attacker’s belief. By deliberately revealing controlled information, through observable system artifacts, configuration traces, or engineered operational security (OPSEC) leaks \cite{almeshekah2014planning, rowe2004two}, the defender can induce a desired belief structure and exploit the mismatch between perceived and actual deployments. The defender strategically induces a target concentration vector $\boldsymbol{\alpha^{\text{ind}}} = (\alpha^{\text{ind}}_1, \alpha^{\text{ind}}_2, \ldots, \alpha^{\text{ind}}_{|V_{\text{spot}}|}) \in \mathbb{R}_{>0}^{|V_{\text{spot}}|}$ where $\alpha^{\text{ind}}_i$ represents the apparent frequency with which node $i$ is observed to be protected during reconnaissance. The deception campaign is designed so that the attacker’s belief is distributed according to $\boldsymbol{q} \sim \text{Dir}(\boldsymbol{\alpha^{\text{ind}}})$, a Dirichlet distribution \cite{lin2016dirichlet}. Because the defender engineers the deception process, the parameter $\boldsymbol{\alpha^{\text{ind}}}$ is known to the defender.

\subsection{Dirichlet model for belief uncertainty}

The attacker holds a private belief $\boldsymbol{q} \in \Delta(V_{\text{spot}})$ where $\boldsymbol{q}(v)$ represents the perceived detection intensity at node $v$. 
The defender does not observe the realization of $\boldsymbol{q}$, but models this uncertainty using the Dirichlet prior $\boldsymbol{p} \sim \mathrm{Dir}(\boldsymbol{\alpha^{\text{ind}}})$. This constitutes a second-order belief: the defender reasons over a distribution of possible attacker beliefs. 

\begin{remark}
    Unlike the blind model (Section~\ref{blind_model}), where the attacker knows the total number $h$ of protections and the marginals satisfy $\displaystyle \sum_v p^{\text{Blind}}(v) = h$, in the Dirichlet model, the attacker may misperceive both the number and distribution of protections. Individual Dirichlet samples may be highly concentrated; for instance, a sample could place most or all weight on a single node. In such cases, the attacker behaves as if only that node is likely protected, even though in reality $h>1$ nodes are protected. This modeling choice captures misperception or incomplete intelligence and allows the defender to exploit epistemic uncertainty through robust or deception-based strategies.
\end{remark} 

\noindent \emph{Attacker optimization given belief $\boldsymbol{q}$:} For a fixed belief vector $\boldsymbol{q}$ the attacker solves: $\pi^*(\boldsymbol{q}) = \displaystyle\argmax_{\pi^{\boldsymbol{q}}} V_2^{\pi^{\boldsymbol{q}}}(\boldsymbol{\nu}),$ where the value function is computed in the belief-modified MDP $\textbf{M}_A^{\boldsymbol{q}} = (S, A, P^{\boldsymbol{q}}, \nu, R)$ with the transition function expressed as:
{\small{\[
P^{\boldsymbol{q}}(\boldsymbol{s'} \mid \boldsymbol{s}=(v,c), a) = 
\begin{cases}
\boldsymbol{q}(v), & \text{if } \boldsymbol{s'} = \boldsymbol{s}_{\text{sink}} \\
(1 - \boldsymbol{q}(v))\,P(\boldsymbol{s'} \mid \boldsymbol{s}, a), & \text{if } \boldsymbol{s'} \neq \boldsymbol{s}_{\text{sink}}
\end{cases}
\]}}
\noindent Thus, the attacker routes while accepting detection risk proportional to $\boldsymbol{q}(v)$. Let $\boldsymbol{v^{q}} = (v_{\boldsymbol{s}}^{\boldsymbol{q}})_{\boldsymbol{s} \in S}$ denote the attacker’s optimal value function under belief $\boldsymbol{q}$. It is computed by solving:
\begin{subequations}
\label{eq:attacker_lp}
\begin{align}
&\min_{\boldsymbol{v}} \quad \sum_{\boldsymbol{s} \in S} \nu_{\boldsymbol{s}}\, v_{\boldsymbol{s}}  \quad  \text{s.t.} \label{eq:belief_lp_obj} \\
& v_{\boldsymbol{s}} \geqslant \sum_{s' \in S} P^{\boldsymbol{q}}(\boldsymbol{s'} \mid \boldsymbol{s}, a)\, v_{\boldsymbol{s'}}, 
&& \forall \boldsymbol{s} \in S,\ \forall a \in \mathcal{A}_2, \label{eq:belief_lp_bellman} \\
& v_{\boldsymbol{s}} = 0, 
&& \forall \boldsymbol{s} \in \{\boldsymbol{s}_{\text{sink}}\}, \label{eq:belief_lp_sink} \\& v_{\boldsymbol{s}} = 1, 
&& \forall \boldsymbol{s} \in F, \label{eq:belief_lp_terminal} \\
& v_{\boldsymbol{s}} \geqslant 0, 
&& \forall \boldsymbol{s} \in S. \label{eq:belief_lp_nonneg}
\end{align}
\end{subequations}
\noindent The optimal policy $\pi^*(\boldsymbol{q})$ is recovered via the Bellman optimality condition.

\noindent \emph{Defender's robust optimization:} After computing $\pi^*(\boldsymbol{q})$, the attacker executes this policy in the true MDP governed by the defender’s deterministic deployment $\boldsymbol{x}$, with transition kernel $P^{\boldsymbol{x}}$. Let $V_2^{\pi^*(\boldsymbol{p})}(\boldsymbol{\nu}, \boldsymbol{x}) $ denote the realized attack success probability when the attacker optimizes under belief $\boldsymbol{p}$, but the system transitions follow $P^{\boldsymbol{x}}$. The defender chooses $\boldsymbol{x}$ to minimize the expected attacker success:
\begin{align}
\label{def_expectation}
\min_{\boldsymbol{x} \in \mathcal{A}_1}
\mathrm{E}_{\boldsymbol{p} \sim \mathrm{Dir}(\boldsymbol{\alpha^{\text{ind}}})}
\left[
V_2^{\pi^*(\boldsymbol{p})}(\boldsymbol{\nu}, \boldsymbol{x})
\right].
\end{align}

\noindent This objective captures robustness against the entire distribution of possible attacker beliefs induced by the deception campaign. The expectation in \eqref{def_expectation} is, however, analytically intractable. The difficulty arises because the optimal policy $\pi^*(\boldsymbol{p})$ depends on $\boldsymbol{p}$ in a complex, nonlinear way (via the Bellman optimality equations), and no closed-form expression exists for this expectation. We therefore approximate the expectation using \emph{Monte Carlo sampling} over belief realizations, replacing the integral over the Dirichlet distribution by an empirical average over sampled belief vectors.

\subsection*{Monte Carlo approximation}

The defender draws $\boldsymbol{p_1},\ldots, \boldsymbol{p_K} \sim \mathrm{Dir}(\boldsymbol{\alpha^{\text{ind}}})$, and approximates
$\mathrm{E}_{\boldsymbol{p} \sim \mathrm{Dir}(\boldsymbol{\alpha^{\text{ind}}})}[V_2^{\pi^*(\boldsymbol{p})}(\boldsymbol{\nu}, \boldsymbol{x})]
\approx \dfrac{1}{K} \displaystyle\sum_{k=1}^K V_2^{\pi^*(\boldsymbol{p_k})}(\boldsymbol{\nu}, \boldsymbol{x})$.

\noindent For each sample $\boldsymbol{p_k}$, the attacker’s optimal policy $\pi^*(\boldsymbol{p_k})$ is computed by solving \eqref{eq:attacker_lp}. For $k \in \{1,\ldots,K\}$, let $v_{\boldsymbol{s},k}$ denote the attacker’s realized value at state $\boldsymbol{s}$ when executing $\pi^*(\boldsymbol{p_k})$ under deployment $\boldsymbol{x}$ (with $h$ protections available, ).
These values satisfy the Bellman evaluation equations:
\begin{align*}
v_{\boldsymbol{s},k} &= R\bigl(\boldsymbol{s},\pi^*(\boldsymbol{p_k})(s)\bigr) + \sum_{\boldsymbol{s'}} P^{\boldsymbol{x}}\bigl(\boldsymbol{s'} \mid \boldsymbol{s},\pi^*(\boldsymbol{p_k})(\boldsymbol{s})\bigr)\,v_{\boldsymbol{s'},k} \\
\text{with}& \ \sum_{\boldsymbol{s'}} P^{\boldsymbol{x}}(\boldsymbol{s'} \mid \boldsymbol{s} = (v,c), a)\, v_{\boldsymbol{s'},k}  
= \boldsymbol{x}(v) \cdot \underbrace{v_{\boldsymbol{s}_{\text{sink}},k}}_{=0}\\
&+ \bigl(1 - \boldsymbol{x}(v)\bigr) \sum_{\boldsymbol{s'} \neq \boldsymbol{s}_{\text{sink}}} P(\boldsymbol{s'} \mid \boldsymbol{s}, a)\, v_{\boldsymbol{s'},k}  \\
&= \sum_{\boldsymbol{s'}} P(\boldsymbol{s'} \mid \boldsymbol{s}, a) \cdot \underbrace{\bigl[\bigl(1 - \boldsymbol{x}(v)\bigr) v_{\boldsymbol{s'},k} \bigr]}_{w_{\boldsymbol{s}, \boldsymbol{s'}, k}}.
\end{align*}

\noindent The Bellman equations contain bilinear terms of the form $(1-\boldsymbol{x}(v))v_{\boldsymbol{s'},k} $. To linearize these products, we introduce auxiliary variables $w_{\boldsymbol{s}, \boldsymbol{s'} ,k} = (1 - \boldsymbol{x}(v))v_{\boldsymbol{s'},k} $. The defender’s optimization problem can then be written as the MILP:
\begin{subequations}
\label{eq:dirichlet_milp}
\begin{align}
\min_{\boldsymbol{x}, \boldsymbol{v}, \boldsymbol{w}} \ & \dfrac{1}{K} \sum_{k=1}^K \sum_{\boldsymbol{s} \in S} c_{\boldsymbol{s}}\, v_{\boldsymbol{s},k} 
\label{eq:dirichlet_obj} \\[0.3em]
\text{s.t.} \ 
& v_{\boldsymbol{s},k} \geqslant \sum_{\boldsymbol{s'} \in S} P(\boldsymbol{s'} \mid \boldsymbol{s}, \pi^*(\boldsymbol{p_k})(\boldsymbol{s}))\, w_{\boldsymbol{s},\boldsymbol{s'},k}, 
 \forall k, \boldsymbol{s}    \label{eq:dirichlet_bellman} \\[0.3em]
& w_{(v,c),\boldsymbol{s'},k} \leqslant v_{\boldsymbol{s'},k} + \boldsymbol{x}(v) , \quad  \forall k, \boldsymbol{s} =(v,c), \boldsymbol{s'} \label{eq:dirichlet_mcc1} \\
& w_{(v,c),\boldsymbol{s'},k} \leqslant 1 - \boldsymbol{x}(v), \qquad \forall k, \boldsymbol{s} = (v, c), \boldsymbol{s'}   \label{eq:dirichlet_mcc2} \\
& w_{(v,c),\boldsymbol{s'},k} \geqslant v_{\boldsymbol{s'},k}, \qquad \qquad  \forall k, \boldsymbol{s} = (v, c), \boldsymbol{s'} \label{eq:dirichlet_mcc3} \\
& w_{(v,c),\boldsymbol{s'},k} \geqslant 0,  \qquad \qquad \quad \forall k, \boldsymbol{s} = (v,c), \boldsymbol{s'} \label{eq:dirichlet_mcc4} \\[0.3em]
& v_{\boldsymbol{s},k} = 0,  \qquad \qquad \qquad \qquad \forall k, \boldsymbol{s} = \boldsymbol{s}_{\text{sink}} \label{eq:dirichlet_sink} \\
& v_{\boldsymbol{s},k} = 1,  \qquad \qquad \qquad \qquad \forall k, \boldsymbol{s} \in F \label{eq:dirichlet_terminal} \\[0.3em]
& \sum_{u \in V_{\text{spot}}} \boldsymbol{x}(u) \leqslant h \label{eq:dirichlet_single} \\
& \boldsymbol{x}(u) \in \{0,1\}, \qquad \qquad \qquad \qquad \forall u \in V_{\text{spot}} \label{eq:dirichlet_binary}
\end{align}
\end{subequations}

\begin{lemma} Let's define the following functions
$\widehat{F}_K(\boldsymbol{x}) = \dfrac{1}{K}\displaystyle\sum_{k=1}^K V_2^{\pi^*(\boldsymbol{p_k})}(\boldsymbol{\nu}, \boldsymbol{x}) ,
\ F(\boldsymbol{x})  = \mathrm{E}_{\boldsymbol{p} \sim \mathrm{Dir}(\boldsymbol{\alpha^{\text{ind}}})} \bigl[ V_2^{\pi^*(\boldsymbol{p})}(\boldsymbol{\nu}, \boldsymbol{x})\bigr]$ representing the sample-average approximation and the true expectation respectively. Let $\boldsymbol{p_1},\ldots, \boldsymbol{p_K}$ be independent draws from the same Dirichlet distribution $\mathrm{Dir}(\boldsymbol{\alpha^{\text{ind}}})$.
Then for any $\varepsilon>0$,\
$\Pr\!\left( \bigl| \widehat{F}_K(\boldsymbol{x})  - F(\boldsymbol{x}) \bigr| > \varepsilon\right)\leqslant 2 \exp\!\left(-2K\varepsilon^2\right)$.
Consequently, to guarantee $\bigl| \widehat{F}_K(\boldsymbol{x})  - F(\boldsymbol{x})  \bigr| \leqslant \varepsilon$ with probability at least $1-\delta$, it suffices to take $K \geqslant \dfrac{1}{2\varepsilon^2}\ln\!\left(\dfrac{2}{\delta}\right)$.
\end{lemma}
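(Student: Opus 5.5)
The plan is to apply Hoeffding's inequality to the i.i.d. bounded random variables
\[
Z_k := V_2^{\pi^*(\boldsymbol{p_k})}(\boldsymbol{\nu}, \boldsymbol{x}), \qquad k=1,\ldots,K,
\]
with the deployment $\boldsymbol{x}\in\mathcal{A}_1$ held fixed throughout. The statement is pointwise in $\boldsymbol{x}$, so no union bound over $\mathcal{A}_1$ is needed.

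\emph{Step 1: boundedness.} For every realization $\boldsymbol{p}$, the quantity $V_2^{\pi}(\boldsymbol{s},\boldsymbol{x})$ is the probability that the fixed policy $\pi=\pi^*(\boldsymbol{p})$, executed under $P^{\boldsymbol{x}}$, reaches $F$ before $\boldsymbol{s}_{\text{sink}}$. Since $G$ is acyclic and the reward is the indicator of reaching a target once, this probability lies in $[0,1]$. Formally, I would check this by backward induction on the DAG, as in the proof of the first Lemma: the terminal values $1$ and $0$ are in $[0,1]$, and each Bellman evaluation step forms a convex combination of successor values. Because $\boldsymbol{\nu}$ is a probability distribution, $Z_k=\sum_{\boldsymbol{s}}\nu_{\boldsymbol{s}}V_2^{\pi}(\boldsymbol{s},\boldsymbol{x})\in[0,1]$ as well.

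\emph{Step 2: measurability and independence.} The map $\boldsymbol{p}\mapsto Z(\boldsymbol{p})$ should be a measurable function. I expect this to be the main subtle point, because $\pi^*(\boldsymbol{p})$ is defined through an argmax and may not be unique. I would handle it by fixing a deterministic tie-breaking rule, for example the lexicographically smallest maximizing action at each state. There are only finitely many deterministic policies, and the set of $\boldsymbol{p}$ selecting a given policy is cut out by finitely many polynomial inequalities in $\boldsymbol{p}$, since the values $v^{\boldsymbol{p}}$ solve the LP~\eqref{eq:attacker_lp}. So $Z$ is a simple function over a partition of the simplex into Borel sets, hence measurable. Because the $\boldsymbol{p_k}$ are i.i.d. $\mathrm{Dir}(\boldsymbol{\alpha^{\text{ind}}})$, the $Z_k$ are i.i.d., and $\mathrm{E}[Z_k]=F(\boldsymbol{x})$, which is well defined since $Z$ is bounded.

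\emph{Step 3: Hoeffding and inversion.} Hoeffding's inequality for i.i.d. variables in $[a,b]=[0,1]$ gives
\[
\Pr\bigl(|\widehat{F}_K(\boldsymbol{x})-F(\boldsymbol{x})|>\varepsilon\bigr)\leqslant 2\exp\!\bigl(-2K\varepsilon^2/(b-a)^2\bigr)=2\exp(-2K\varepsilon^2).
\]
To get the sample-size bound, require $2\exp(-2K\varepsilon^2)\leqslant\delta$. Taking logarithms, this is equivalent to $K\geqslant \frac{1}{2\varepsilon^2}\ln(2/\delta)$. Under that condition the complementary event $|\widehat{F}_K(\boldsymbol{x})-F(\boldsymbol{x})|\leqslant\varepsilon$ has probability at least $1-\delta$.
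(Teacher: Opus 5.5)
Your proposal is correct and follows the same route as the paper: set $Z_k := V_2^{\pi^*(\boldsymbol{p_k})}(\boldsymbol{\nu},\boldsymbol{x})$, note that the $Z_k$ are i.i.d., lie in $[0,1]$, and have mean $F(\boldsymbol{x})$, then apply Hoeffding's inequality and invert $2\exp(-2K\varepsilon^2)\leqslant\delta$. Your tie-breaking and measurability argument and the explicit inversion are details the paper leaves implicit, and they do not change the approach.
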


\begin{proof}
For each sampled belief $\boldsymbol{p_k}$, let $Z_k := V_2^{\pi^*(\boldsymbol{p_k})}(\boldsymbol{\nu}, \boldsymbol{x})$. Because the evaluation of a fixed policy yields a probability of reaching the target, we have $0 \leqslant Z_k \leqslant 1$. Since the samples $\boldsymbol{p_k}$ are independent and identically distributed (i.i.d.), the random variables $Z_k$ are i.i.d. and bounded in $[0,1]$, with $\mathrm{E}[Z_k] = F(\boldsymbol{x})$ Hoeffding's inequality for the empirical mean
$\widehat{F}_K(\boldsymbol{x}) = \dfrac{1}{K}\displaystyle\sum_{k=1}^K Z_k$ gives $\Pr\!\left(\left| \widehat{F}_K(\boldsymbol{x}) - F(\boldsymbol{x}) \right| > \varepsilon \right) \leqslant 2\exp(-2K\varepsilon^2)$.
\end{proof}

\subsection{Dirichlet vs Stackelberg: advantages of a Dirichlet-robust approach}

Let $d = \abs{V_{\text{spot}}}$, and let the attacker’s subjective belief about node protection be modeled as a probability vector $\boldsymbol{p} = (p_v)_{v \in V_{\text{spot}}} \in \Delta^{d-1}:= \displaystyle\left\{\boldsymbol{p} \in \mathbb{R}_+^d: \sum_{v} p_v = 1 \right\}$. This vector represents the attacker’s perception of how protection is distributed across nodes, and is normalized to sum to $1$, independent of the actual number $h$ of protected nodes. Each realization of a Dirichlet random vector $\boldsymbol{p}\sim\mathrm{Dir}(\boldsymbol{\alpha^{\text{ind}}})$ may be concentrated or diffuse, reflecting differences in attacker intelligence quality.

Let $ L(\boldsymbol{x}, \pi):= V_2^{\pi}(\boldsymbol{\nu}, \boldsymbol{x})$ be the realized attacker success probability under policy $\pi$ when the defender deploys \(\boldsymbol{x} \in \mathcal{A}_1\). For a fixed belief \(\boldsymbol{p}\), the attacker’s optimal policy is then expressed as \(\pi^*(\boldsymbol{p}) \in \displaystyle \argmax_{\pi(\boldsymbol{p}) \in \Pi} L(\boldsymbol{x}, \pi(\boldsymbol{p}))\), and the Stackelberg defender problem is $\displaystyle\boldsymbol{x^{\mathrm{St}}} \in \argmin_{\boldsymbol{x}\in\mathcal A_1} L(\boldsymbol{x},\pi^*(\boldsymbol{p}))$. Since the MDP is finite with  bounded rewards, we have $0 \leqslant L(\boldsymbol{x}, \pi(\boldsymbol{p})) \leqslant 1, \forall \boldsymbol{x}\in\mathcal A_1.$
\bigskip 

\noindent The Stackelberg optimal value is defined as, $V^{\mathrm{St}}:= \displaystyle\min_{\boldsymbol{x} \in \mathcal{A}_1} \max_{\pi \in \Pi} L(\boldsymbol{x}, \pi)$, while the Dirichlet-robust strategy minimizes the expected success probability under the belief distribution:

{\small{$$\boldsymbol{x^{\mathrm{Dir}}} \in \argmin_{\boldsymbol{x} \in \mathcal{A}_1} \mathrm{E}_{\boldsymbol{p} \sim \mathrm{Dir}(\boldsymbol{\alpha^{\text{ind}}})} \left[ L(\boldsymbol{x}, \pi^*(\boldsymbol{p})) \right] := \argmin_{\boldsymbol{x} \in \mathcal{A}_1} J_{\boldsymbol{\alpha^{\text{ind}}}}(\boldsymbol{x})$$}}

\noindent For any realization $\boldsymbol{p} \sim \mathrm{Dir}(\boldsymbol{\alpha^{\text{ind}}})$, the attacker's success satisfies.
\begin{align}
    & L(\boldsymbol{x^{\mathrm{St}}}, \pi^*(\boldsymbol{p})) \leqslant \max_{\pi \in \Pi} L(\boldsymbol{x^{\mathrm{St}}}, \pi) = V^{\mathrm{St}} \notag \\
    & \implies \mathrm{E}_{\boldsymbol{p} \sim \mathrm{Dir}(\boldsymbol{\alpha^{\text{ind}}})} \left[ L(\boldsymbol{x^{\mathrm{St}}}, \pi^*(\boldsymbol{p})) \right] \leqslant V^{\mathrm{St}} \label{worst_stack}\\
    & \text{as\ $\boldsymbol{x^{\mathrm{Dir}}}$\ is\ a\ minimizer,\ we\ obtain} \notag\\
    & \mathrm{E}_{\boldsymbol{p} \sim \mathrm{Dir}(\boldsymbol{\alpha^{\text{ind}}})} \left[ L(\boldsymbol{x^{\mathrm{Dir}}}, \pi^*(\boldsymbol{p})) \right] \leqslant \mathrm{E}_{\boldsymbol{p} \sim \mathrm{Dir}(\boldsymbol{\alpha^{\text{ind}}})} \left[ L(\boldsymbol{x^{\mathrm{St}}}, \pi^*(\boldsymbol{p})) \right] \notag\\
    & \iff J_{\boldsymbol{\alpha^{\text{ind}}}}(\boldsymbol{x^{\mathrm{Dir}}}) \leqslant \mathrm{E}_{\boldsymbol{p} \sim \mathrm{Dir}(\boldsymbol{\alpha^{\text{ind}}})} \left[ L(\boldsymbol{x^{\mathrm{St}}}, \pi^*(\boldsymbol{p})) \right] \leqslant  V^{\mathrm{St}} \notag.
\end{align}

\noindent so the Dirichlet optimization never performs worse than the Stackelberg solution. Moreover, this inequality can be strict. In particular, when the attacker’s best response changes discontinuously with beliefs, a Dirichlet distribution concentrated around specific belief regions may yield a strictly lower expected attacker success than the Stackelberg strategy. The next assumption and theorem formalize this result.

\begin{assumption}[boundary separability]\label{asm:bounded-separability}
Let \(\boldsymbol{p^*} \in\Delta^{d-1}\) be a belief at which the attacker's best-response mapping may be discontinuous (i.e., \(\boldsymbol{p^*}\) may lie on a policy-switching boundary). Suppose there exist defender deployments \(\boldsymbol{x^{\mathrm{St}}}\) and \(\boldsymbol{x'}\in\mathcal A_1\) and a sequence of beliefs \(\boldsymbol{p^{(n)}}\to \boldsymbol{p^*}\) such that $\displaystyle \limsup_{n \to \infty} L(\boldsymbol{x^{\mathrm{St}}}, \boldsymbol{p^{(n)}})
> \liminf_{n \to \infty} L(\boldsymbol{x'}, \boldsymbol{p^{(n)}})$, i.e., there exist a constant $\Delta > 0$ for which, for infinitely many \(n\),
\begin{equation}\label{eqn:strict-difference}
L(\boldsymbol{x^{\mathrm{St}}}, \boldsymbol{p^{(n)}}) \;\geqslant\; L(\boldsymbol{x'}, \boldsymbol{p^{(n)}}) + \Delta.
\end{equation}
Hence, we can select a subsequence of \(\boldsymbol{p^{(n)}}\to \boldsymbol{p^*}\) so that \eqref{eqn:strict-difference} holds for all $n$. Let us hereafter take \(\boldsymbol{p^{(n)}}\to \boldsymbol{p^*}\) to be this subsequence (to ease our notation).
\end{assumption}

\begin{theorem}\label{lem:dirichlet-concentration}
    Under assumption \ref{asm:bounded-separability} there exists a sequence of Dirichlet parameters \(\boldsymbol{\boldsymbol{\boldsymbol{\alpha_n}}} := M_n \boldsymbol{p^{(n)}}\) with \(M_n \to \infty\) and \(\boldsymbol{p^{(n)}}\to \boldsymbol{p^*}\) (the same sequence as in assumption \ref{asm:bounded-separability}) such that for all sufficiently large \(n\), $\quad J_{\boldsymbol{\alpha_n}}(\boldsymbol{x'}) \;<\; J_{\boldsymbol{\alpha_n}}(\boldsymbol{x^{\mathrm{St}}}).$
\end{theorem}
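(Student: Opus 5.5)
The plan is to exploit the concentration of the Dirichlet family: $\mathrm{Dir}(M\boldsymbol{m})$ with mean $\boldsymbol{m}$ has coordinatewise variance $m_v(1-m_v)/(M+1)$. For large $M$ the belief $\boldsymbol{p}$ is therefore close to its mean with high probability. We write $L(\boldsymbol{x},\boldsymbol{p}) := L(\boldsymbol{x},\pi^*(\boldsymbol{p}))$, using a fixed measurable tie-breaking rule for $\pi^*$. Since the policy space $\Pi$ of deterministic stationary policies of the finite MDP is finite, $\boldsymbol{p}\mapsto L(\boldsymbol{x},\boldsymbol{p})$ is a finite-valued, piecewise-constant function on $\Delta^{d-1}$. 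It takes values in $[0,1]$ and is constant on the cells where a given policy is the selected argmax.

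For each fixed $n$ we decompose
\[
J_{\boldsymbol{\alpha_n}}(\boldsymbol{x^{\mathrm{St}}})-J_{\boldsymbol{\alpha_n}}(\boldsymbol{x'})=\mathrm{E}_{\boldsymbol{p}\sim\mathrm{Dir}(M_n\boldsymbol{p^{(n)}})}\bigl[L(\boldsymbol{x^{\mathrm{St}}},\boldsymbol{p})-L(\boldsymbol{x'},\boldsymbol{p})\bigr].
\]
We split this expectation over a neighborhood $B_n$ of $\boldsymbol{p^{(n)}}$ on which the difference stays at least $\Delta$, and over its complement. The boundedness $0\leqslant L\leqslant 1$ gives
\[
J_{\boldsymbol{\alpha_n}}(\boldsymbol{x^{\mathrm{St}}})-J_{\boldsymbol{\alpha_n}}(\boldsymbol{x'})\geqslant \Delta\,\Pr(\boldsymbol{p}\in B_n)-\Pr(\boldsymbol{p}\notin B_n).
\]
This is positive as soon as $\Pr(\boldsymbol{p}\in B_n)>1/(1+\Delta)$. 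Because $\Delta$ does not depend on $n$, the only remaining task is to choose $B_n$ and $M_n$ so that this probability tends to one.

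The key step, and the main obstacle, is producing $B_n$. Assumption~\ref{asm:bounded-separability} gives $L(\boldsymbol{x^{\mathrm{St}}},\boldsymbol{p^{(n)}})-L(\boldsymbol{x'},\boldsymbol{p^{(n)}})\geqslant\Delta$ only at the points $\boldsymbol{p^{(n)}}$ themselves. Since $\boldsymbol{p^*}$ may sit on a switching boundary, the points $\boldsymbol{p^{(n)}}$ may approach that boundary. The plan handles this using the cell structure. The gap function is piecewise constant with finitely many cells, so each $\boldsymbol{p^{(n)}}$ lies in a cell $C_n$ of the best-response partition on which the gap equals its value at $\boldsymbol{p^{(n)}}$. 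First try the case where $\boldsymbol{p^{(n)}}$ is an interior point of $C_n$, which holds after an arbitrarily small perturbation that we may fold into the choice of subsequence since the cells are semialgebraic with nonempty interior. Then a radius $r_n>0$ exists with $B(\boldsymbol{p^{(n)}},r_n)\cap\Delta^{d-1}\subseteq C_n$, and we set $B_n := B(\boldsymbol{p^{(n)}},r_n)\cap\Delta^{d-1}$.

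Concentration is then handled with Chebyshev's inequality:
\[
\Pr(\|\boldsymbol{p}-\boldsymbol{p^{(n)}}\|\geqslant r_n)\leqslant \frac{\sum_v p^{(n)}_v(1-p^{(n)}_v)}{(M_n+1)r_n^2}\leqslant\frac{1}{(M_n+1)r_n^2}.
\]
So we choose $M_n\geqslant n/r_n^2$; we may also take $M_n\geqslant n$, which ensures $M_n\to\infty$. This makes $\Pr(\boldsymbol{p}\notin B_n)\leqslant 1/n\to 0$, so $\Pr(\boldsymbol{p}\in B_n)>1/(1+\Delta)$ for large $n$. This gives the strict inequality $J_{\boldsymbol{\alpha_n}}(\boldsymbol{x'})<J_{\boldsymbol{\alpha_n}}(\boldsymbol{x^{\mathrm{St}}})$ for all sufficiently large $n$.

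Two points need checking. First, $\boldsymbol{p^{(n)}}$ must have positive entries so that $\boldsymbol{\alpha_n}\in\mathbb{R}^d_{>0}$; if necessary, replace it by a nearby interior point of $C_n$. Second, the measurability of the tie-broken best response must hold.
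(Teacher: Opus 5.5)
Your overall strategy is the paper's. You concentrate $\mathrm{Dir}(M_n\boldsymbol{p^{(n)}})$ around $\boldsymbol{p^{(n)}}$ via Chebyshev, split the expectation over a neighborhood of $\boldsymbol{p^{(n)}}$ and its complement, and use $0\leqslant L\leqslant 1$. Your direct bound $J_{\boldsymbol{\alpha_n}}(\boldsymbol{x^{\mathrm{St}}})-J_{\boldsymbol{\alpha_n}}(\boldsymbol{x'})\geqslant\Delta\Pr(\boldsymbol{p}\in B_n)-\Pr(\boldsymbol{p}\notin B_n)$ and the explicit choice $M_n\geqslant n/r_n^2$ are cleaner than the paper's two-sided approximation. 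The paper leaves implicit that the required $M_n$ depends on the $n$-dependent size of $U_n$.

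The gap is the step you yourself call the main obstacle. You assert that the best-response cells are semialgebraic \emph{with nonempty interior}, so that $\boldsymbol{p^{(n)}}$ can be nudged into the interior of its own cell $C_n$. This is not proved, and it is false in general. The set on which the tie-broken selection equals a given policy can be lower-dimensional. For example, suppose two states face the same comparison between successors and the rule breaks the tie in opposite directions at them. Then the policy combining those two choices is selected only on the tie hypersurface. Even a cell with interior may contain points outside the closure of that interior. If $\boldsymbol{p^{(n)}}$ lies on such a piece, every nearby interior point belongs to a different cell. Nothing in your argument guarantees a gap of at least $\Delta$ there, so \eqref{eqn:strict-difference} can be lost after the perturbation. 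The perturbation also replaces the sequence, which the statement insists is the one from Assumption~\ref{asm:bounded-separability}.

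This is not cosmetic. $\mathrm{Dir}(\boldsymbol{\alpha_n})$ is absolutely continuous, so $J_{\boldsymbol{\alpha_n}}$ does not see the values of $L$ at the individual points $\boldsymbol{p^{(n)}}$, and those values are all that Assumption~\ref{asm:bounded-separability} controls. Some local information at $\boldsymbol{p^{(n)}}$ is therefore indispensable. The paper obtains it by fiat: it invokes ``continuity of the utility function $L$'' to pick $U_n$ with $|L(\boldsymbol{x},\boldsymbol{q})-L(\boldsymbol{x},\boldsymbol{p^{(n)}})|<\Delta/4$ on $U_n$. For a finite-valued $L$, that is exactly the statement that $\boldsymbol{p^{(n)}}$ is a relative interior point of its cell, so the paper assumes it rather than proves it.

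The repair is to adopt this as an explicit hypothesis instead of deriving it. One sufficient condition is that at each $\boldsymbol{p^{(n)}}$ the attacker's optimal action is unique at every state. The $Q$-values of the belief MDP are continuous in the belief, so strict preferences, and hence $\pi^*$ and every $L(\boldsymbol{x},\cdot)$, stay constant on a neighborhood. You also need $\boldsymbol{p^{(n)}}$ to have positive entries. Under these hypotheses your argument is complete as written.
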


\begin{proof}

A standard construction of the Dirichlet distribution is via normalized Gamma random variables \cite{ferguson1973bayesian}:
Let $Y_1, \dots, Y_d$ be independent random variables with $Y_i \sim \mathrm{Gamma}(\alpha_{n,i}, 1),\ \alpha_{n,i} = M_n\, p_i^{(n)},\ i=1,\dots,d,$, where $p_i^{(n)}$ is the deterministic center of the Dirichlet and $M_n>0$ is a scaling parameter controlling concentration. Define the Dirichlet random variables $p_i = \dfrac{Y_i}{\displaystyle\sum_{j=1}^d Y_j}$, then $\mathrm{E}[p_i] = p_i^{(n)},\ 
\mathrm{Var}[p_i] = \dfrac{p_i^{(n)} (1 - p_i^{(n)})}{M_n + 1}$, thus as $M_n \to \infty$, the variance decreases and $p_i$ concentrates around $p_i^{(n)}$. More rigorously, let $Z_i$ = $\dfrac{Y_i}{M_n}$ then, $\mathbb E\left[Z_i\right] = p^{(n)}_i,\
\mathrm{Var}\!\left(Z_i\right) = \dfrac{p^{(n)}_i}{M_n}$. By Chebyshev's inequality \cite{rivlin2020chebyshev}, for any \(\varepsilon>0\),
\[
\Pr\!\Big(\big|Y_i/M_n - p^{(n)}_i\big| \geqslant \varepsilon\Big) \leqslant\dfrac{\mathrm{Var}(Y_i/M_n)}{\varepsilon^2} = \dfrac{p^{(n)}_i}{M_n\varepsilon^2} \xrightarrow[M_n\to\infty]{} 0.
\]
and by Slutsky's theorem \cite{delbaen2006remark}, $p_i = \dfrac{Y_i/M_n}{\displaystyle \sum_j Y_j/M_n} \xrightarrow{} p^{(n)}_i \ \text{as } M_n \to \infty$, and then $\boldsymbol{p} \xrightarrow{} \boldsymbol{p}^{(n)}$. Therefore, for any neighborhood \(U_n := \{\boldsymbol{p} \in \Delta^{d-1} \,:\, \|\boldsymbol{p} - \boldsymbol{p^{(n)}}\|_\infty < \varepsilon \}\) of \(\boldsymbol{p^{(n)}}\), $\displaystyle \Pr_{\boldsymbol{p}\sim\mathrm{Dir}(M_n \boldsymbol{p^{n}})}[\boldsymbol{p}\in U_n] \xrightarrow[M_n\to\infty]{} 1$. 
Moreover, by assumption~\ref{asm:bounded-separability} there exists a sequence \(\boldsymbol{p^{(n)}} \xrightarrow{} \boldsymbol{p^*}\) and a positive constant \(\Delta>0\) such that, $L\big(\boldsymbol{x}^{\mathrm{St}}, \boldsymbol{p^{(n)}}\big) \;\geqslant\; L\big(\boldsymbol{x'}, \boldsymbol{p^{(n)}}\big) + 2\Delta$.
Using boundedness and continuity of the utility function \(L\) (although the value function may be discontinuous), pick a small neighborhood $U_n$ around $\boldsymbol{p}^{(n)}$ such that for all $\boldsymbol{x} \in \mathcal{A}_1$ and $\boldsymbol{q} \in U_n$, $\displaystyle |L(\boldsymbol{x}, \boldsymbol{q}) - L(\boldsymbol{x}, \boldsymbol{p^{(n)}})| < \dfrac{\Delta}{4}.$ In this case,
\begin{align*}
& J_{\alpha_n}(\boldsymbol{x})  = \mathrm{E}[L(\boldsymbol{x}, \pi(\boldsymbol{p}))\mathbf{1}_{\{\boldsymbol{p}\in U_n\}}] + \mathrm{E}[L(\boldsymbol{x}, \pi(\boldsymbol{p}))\mathbf{1}_{\{\boldsymbol{p} \notin U_n\}}]
\end{align*}
On the set \(\{\boldsymbol{p}\in U_n\}\) we have \(L(\boldsymbol{x},\pi(\boldsymbol{p})) = L(\boldsymbol{x}, \pi(\boldsymbol{p^{(n)}})) + \delta(\boldsymbol{x}, \boldsymbol{p})\) with \(|\delta(\boldsymbol{x}, \boldsymbol{p})| < \Delta/4\). 
Hence
\begin{align*}
    \mathrm{E}[L(\boldsymbol{x}, \pi(\boldsymbol{p}))\mathbf{1}_{\{\boldsymbol{p}\in U_n\}}] & = L(\boldsymbol{x}, \pi(\boldsymbol{p^{(n)}}))\Pr(\boldsymbol{p}\in U_n)\\
    & + \delta(\boldsymbol{x}, \boldsymbol{p}) \Pr(\boldsymbol{p}\in U_n)
\end{align*}
\noindent Using the fact that \(|L(\boldsymbol{x}, \pi)|\leqslant 1\), \ $\forall$ $\boldsymbol{x} \in \mathcal{A}_1$, we obtain,
\begin{align*}
& \big| J_{\alpha_n}(\boldsymbol{x})  - L(\boldsymbol{x}, \boldsymbol{p^{(n)}}) \big|
 = \big| \mathrm{E}[L(\boldsymbol{x},\pi(\boldsymbol{p})) \mathbf{1}_{\{\boldsymbol{p} \in U_n\}}] \\
& - L(\boldsymbol{x},\pi(p^{(n)})) + \mathrm{E}[L(\boldsymbol{x},\pi(\boldsymbol{p})) \mathbf{1}_{\{p \notin U_n\}}] \big|\\
& = \big| L(\boldsymbol{x}, \pi(\boldsymbol{p^{(n)}}))\Pr(\boldsymbol{p}\in U_n) + \delta(\boldsymbol{x}, \boldsymbol{p}) \Pr(\boldsymbol{p}\in U_n) \\
& - L(\boldsymbol{x},\pi(\boldsymbol{p^{(n)}})) + \mathrm{E}[L(\boldsymbol{x},\pi(\boldsymbol{p})) \mathbf{1}_{\{\boldsymbol{p} \notin U_n\}}] \big|\\
& =  \big| - L(\boldsymbol{x}, \pi(\boldsymbol{p^{(n)}}))\Pr(\boldsymbol{p}\notin U_n) + \delta(\boldsymbol{x}, \boldsymbol{p}) \Pr(\boldsymbol{p}\in U_n) \\
& + \mathrm{E}[L(\boldsymbol{x},\pi(\boldsymbol{p})) \mathbf{1}_{\{p \notin U_n\}}] \big|\\
&\leqslant \dfrac{\Delta}{4} \Pr(\boldsymbol{p}\in U_n) + 4\Pr(\boldsymbol{p}\notin U_n).
\end{align*}

\noindent Since $\displaystyle \Pr_{\boldsymbol{p}\sim\mathrm{Dir}(M_n \boldsymbol{p^{n}})}[\boldsymbol{p}\in U_n] \xrightarrow[M_n\to\infty]{} 1$, for \(M_n\) sufficiently large, \(\Pr(\boldsymbol{p}\notin U_n) < \dfrac{\Delta}{16}\). Then
\begin{align*}
\big| J_{\alpha_n}(\boldsymbol{x})  - L(\boldsymbol{x}, \boldsymbol{p^{(n)}}) \big| &< \dfrac{\Delta}{4} + 4\cdot\dfrac{\Delta}{16} = \dfrac{\Delta}{2}\\
\iff J_{\alpha_n}(\boldsymbol{x^{\mathrm{St}}})
&\geqslant L(\boldsymbol{x^{\mathrm{St}}}, \boldsymbol{p^{(n)}}) - \dfrac{\Delta}{2} \\
&\geqslant \big(L(\boldsymbol{x'}, \boldsymbol{p^{(n)}}) + 2\Delta\big) - \dfrac{\Delta}{2} \quad\text{by \eqref{eqn:strict-difference}}\\
&> L\boldsymbol{(x'}, \boldsymbol{p^{(n)}}) + \dfrac{\Delta}{2} \geqslant J_{\alpha_n}(\boldsymbol{x'}).
\end{align*}
Therefore, 
$J_{\boldsymbol{\alpha^{\text{ind}}}}(\boldsymbol{x^{\mathrm{Dir}}}) \leqslant J_{\boldsymbol{\alpha^{\text{ind}}}}(\boldsymbol{x'}) <J_{\boldsymbol{\alpha^{\text{ind}}}}(\boldsymbol{x^{\mathrm{St}}})$. By exploiting~\ref{worst_stack}, $J_{\boldsymbol{\alpha^{\text{ind}}}}(\boldsymbol{x^{\mathrm{Dir}}}) < J_{\boldsymbol{\alpha^{\text{ind}}}}(\boldsymbol{x^{\mathrm{St}}}) \leqslant V^{St}$ with $\boldsymbol{\alpha^{\text{ind}}} = \alpha_n$.
\end{proof}

Under the assumption~\ref{asm:bounded-separability} and considering that the attacker and the defender follow the same Dirichlet distribution $\text{Dir}(\boldsymbol{\alpha^{\text{ind}}})$, the theorem proves that the Dirichlet-optimal strategy achieves strictly lower expected real attacker success than Stackelberg optimization.

\subsection{Concrete illustration for a 6-node network} 

Consider the MDP model instantiated on the following Figure~\ref{transition-dynamics-values}:
\noindent Path B: $A \to B \to T$ (2 steps)\\
Path C: $A \to C \to T$ (2 steps)\\
Path D - E: $A \to D \to E \to T$ (4 steps)
\begin{figure}[hb!]
\vspace{-0.75cm}
\centering
\begin{tikzpicture}[
    auto,
    >=Stealth,
    node distance=2cm and 2cm,
    state/.style={circle, draw, thick, minimum size=9mm, inner sep=0pt, font=\footnotesize},
    lbl/.style={font=\tiny, align=center}
]

\node[state] (A) {$s_A$};
\node[above=1mm of A, font=\tiny] {Entry};

\node[state, above right=1cm and 2cm of A] (B) {$s_B$};
\node[above=1mm of B, font=\tiny] {B};

\node[state, below right=1cm and 2cm of A] (C) {$s_C$};
\node[below=1mm of C, font=\tiny] {C};

\node[state, right=4cm of A] (D) {$s_D$};
\node[above=1mm of D, font=\tiny] {D};

\node[state, below=1.3cm of D] (E) {$s_E$};
\node[below=1mm of E, font=\tiny] {E};

\node[state, right=1.65cm of D] (T) {$s_T$};
\node[above=1mm of T, font=\tiny] {Target};

\node[state, above = 2cm of T] (sink) {$s_s$};
\node[above=1mm of sink, font=\tiny] {Sink};

\draw[->, thick] (A) -- node[above, lbl] {B} (B);
\draw[->, thick] (A) -- node[below, lbl] {C} (C);
\draw[->, thick] (A) -- node[above, lbl] {D-E} (D);

\draw[->, red, thick] (B) to[out=60,in=180] node[above, lbl] {$p_B$} (sink);
\draw[->, blue, thick, dashed] (B) -- node[above, lbl] {$1-p_B$} (T);

\draw[->, red, thick] (C) to[out=100,in=180] node[left, lbl] {$p_C$} (sink);
\draw[->, blue, thick, dashed] (C) -- node[above, lbl] {$1-p_C$} (T);

\draw[->, red, thick] (D) to[out=30,in=210] node[right, lbl] {$p_D$} (sink);
\draw[->, blue!60!black, thick, dashed] (D) -- node[below, lbl] {$1-p_D$} (E);

\draw[->, red, thick] (E) to[out=0,in=270] node[right, lbl] {$p_E$} (sink);
\draw[->, blue!60!black, thick, dashed] (E) -- node[below, lbl] {$1-p_E$} (T);

\draw[->, green!60!black, thick] (T) to[out=70,in=300] node[above, lbl] {success} (sink);

\node[below=3mm of E.south, lbl, align=center] {
  Short: $P=0.80$ \quad Long: $P=0.35$  \qquad $\mathrm{Val}_B = (1-p_B) \times 0.80$\\
   $\mathrm{Val}_C = (1-p_C) \times 0.80$ \qquad $\mathrm{Val}_{DE} = (1-p_D)(1-p_E) \times 0.35$
};
\end{tikzpicture}
\caption{Transition dynamics. Short paths success: 0.80; long path success: 0.35.}
\label{transition-dynamics-values}
\end{figure}
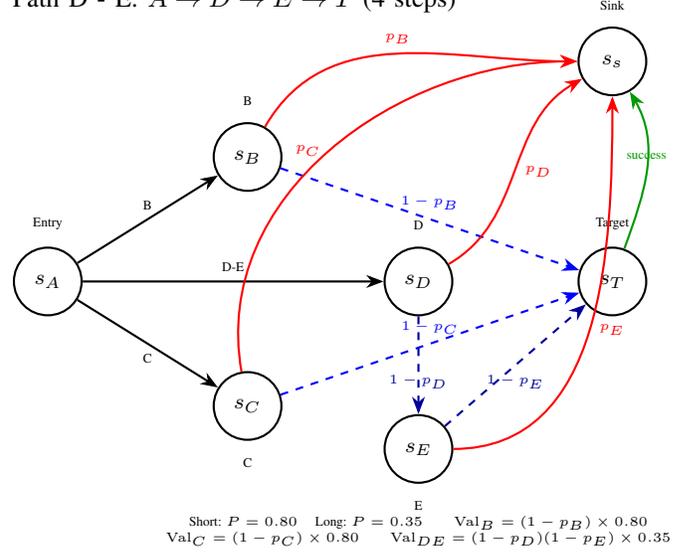

\noindent Defender can protect exactly one node from $\{B, C, D, E\}$) and the attacker success probabilities are given by $\Pr(N \geqslant 2) \approx 0.80$ (reach short paths) and $\Pr(N \geqslant 4) \approx 0.35$ (reach long path).
\noindent Given belief $p = (p_B, p_C, p_D, p_E)$ over protection probabilities, expected values under belief $p$ are expressed as:
\begin{align*}
\text{Value}(\text{path via } B \mid p) &= (1-p_B) \cdot 0.80 \\
\text{Value}(\text{path via } C \mid p) &= (1-p_C) \cdot 0.80 \\
\text{Value}(\text{path via } D,E \mid p) &= (1-p_D)(1-p_E) \cdot 0.35
\end{align*}

\noindent The Stackelberg game framework analysis of the game is,
\begin{table}[hb!]
\begin{tabular}{lllc}
\hline
Def. protects & Att. knows this & Att. chooses & Success prob. \\
\hline
$x = B$ & Belief: $(1,0,0,0)$ & Path $C$ & $0.80$ \\
$x = C$ & Belief: $(0,1,0,0)$ & Path $B$ & $0.80$ \\
$x = D$ & Belief: $(0,0,1,0)$ & Path $B$ or $C$ & $0.80$ \\
$x = E$ & Belief: $(0,0,0,1)$ & Path $B$ or $C$ & $0.80$ \\
\hline
\end{tabular}
\end{table}

\subsubsection*{ - Identify the critical point \(\boldsymbol{p}^*\) and the sequence \(\boldsymbol{p}^{(n)}\)}
From the analysis, the attacker switches policy when $p_B = p_C$. Let $\boldsymbol{p^*} = (0.25, 0.25, 0.25, 0.25)$, to satisfy assumption~\ref{asm:bounded-separability}.
\begin{align*}
&\text{Value}(\text{path B} \mid \boldsymbol{p^*}) = \left(1-\tfrac{1}{4}\right) \cdot 0.80 = \tfrac{3}{4} \cdot 0.80 = 0.60 \\
&\text{Value}(\text{path C} \mid \boldsymbol{p^*}) = \left(1-\tfrac{1}{4}\right) \cdot 0.80 = 0.60 \\
&\text{Value}(\text{path D,E} \mid \boldsymbol{p^*}) = \left(1-\tfrac{1}{4}\right)\left(1-\tfrac{1}{4}\right) \cdot 0.35  \approx 0.20
\end{align*}
\noindent Attacker's optimal policy under $p^*$ is then $\pi^*(p^*) = \text{Path B}$. Therefore,
$\text{Stackelberg avg} = 0.80$ and $\text{Dirichlet avg} = 0.60$.

\noindent Construct a sequence approaching $\boldsymbol{p}^*$ from the ``B-side'': $\boldsymbol{p}^{(n)} = \Big(0.25 + \frac{1}{n},\ 0.25 - \frac{1}{n},\ 0.25,\ 0.25\Big), \quad n\geqslant 5$; for all $n$, $p_B^{(n)} > p_C^{(n)}$, hence $\pi^*(\boldsymbol{p}^{(n)}) = \text{Path B}$.
Following Theorem~\ref{lem:dirichlet-concentration}, we construct a Dirichlet: $\boldsymbol{\alpha_n} = M_n \boldsymbol{p}^{(n)}, \quad M_n \to \infty$.
Let $\boldsymbol{p} \sim \mathrm{Dir}(\boldsymbol{\alpha_n})$, then, as $M_n \to \infty$, $\Pr\big(\|\boldsymbol{p} - \boldsymbol{p}^{(n)}\|_\infty < \varepsilon\big) \to 1$. 

\subsubsection*{ - Comparison of Stackelberg and Dirichlet-robust strategies along the sequence}
By considering:
\begin{itemize}
    \item $\boldsymbol x^{\mathrm{St}}$: any Stackelberg strategy (e.g., protect $C$, $D$, or $E$)
    \item $x' = \text{protect B}$
\end{itemize}
For any $\boldsymbol p^{(n)}$ in the sequence, the attacker follows Path B. Therefore, $L(\boldsymbol x^{\mathrm{St}}, \boldsymbol p^{(n)}) = 0.80 \text{ and } L(\boldsymbol x', \boldsymbol p^{(n)}) = 0$. Hence, $L(\boldsymbol x^{\mathrm{St}}, \boldsymbol p^{(n)}) = 0.80 \geqslant 0 + 0.80 = L(\boldsymbol x', \boldsymbol p^{(n)}) + \Delta$ with $\Delta = 0.80$. If $\boldsymbol x^{\mathrm{St}} = B$, we instead take $\boldsymbol x' = C$ and approach $\boldsymbol p^*$ from the C-side, obtaining the same gap. That is, assumption~\ref{asm:bounded-separability} holds with:  $\boldsymbol p^* = (0.25, 0.25, 0.25, 0.25)$, the sequence $\boldsymbol p^{(n)} \to \boldsymbol p^*$ from the B-side, $\boldsymbol x^{\mathrm{St}}$ (any Stackelberg strategy $\neq B$) and $\boldsymbol x' = B$, with a constant gap $\Delta = 0.80$ along the whole sequence. This is summarized by the following Table,
\begin{table}[ht!]
\centering
\begin{tabular}{lcccc|c}
\hline
Defender strategy & B & C & D & E & AVG. \\
\hline
Stackelberg & 0.80 & 0.80 & 0.80 & 0.80 & 0.80 \\
Dirichlet $x'$ & 0.00 & 0.80 & 0.80 & 0.80 & 0.60 \\
Improvement & 100\% & 0\% & 0\% & 0\% & 25\% \\
\hline
\end{tabular}
\caption{Expected attacker success (loss) for Stackelberg vs Dirichlet-robust strategy.}
\end{table}

The regime guiding the game dynamics is described in the following section.

\section{Movement pattern studied: defender with random intervals}

We consider a defender who performs deployments at random times according to a Poisson process with rate $\lambda_D > 0$. Consequently, the time interval during which the defender is absent is a random variable $T_D \sim \mathrm{Exp}(\lambda_D), \quad f_{T_D}(t) = \lambda_D e^{-\lambda_D t},\ t \geqslant 0$. The attacker is assumed to be active only while the defender is absent. During a defender's absence period of length $t$, the attacker performs penetration steps according to a Poisson process with rate $\lambda > 0$. Conditional on $T_D = t$, the number of attack steps $\widetilde{N}$ therefore follows a Poisson distribution: $\Pr[\widetilde{N} = n \mid T_D = t] = \dfrac{(\lambda t)^n}{n!} e^{-\lambda t},\ n = 0,1,2,\dots$. Since the duration $T_D$ of the defender's absence is itself random, the unconditional distribution of $\widetilde{N}$ is obtained by mixing the Poisson distribution with the exponential distribution:
\begin{align*}
&\Pr[\widetilde{N} = n]
=
\int_0^\infty
\Pr[\widetilde{N} = n \mid T_D = t]\,
f_{T_D}(t)\, dt\\
&=
\int_0^\infty
\dfrac{(\lambda t)^n}{n!} e^{-\lambda t}
\cdot
\lambda_D e^{-\lambda_D t}
\, dt = 
\dfrac{\lambda_D \lambda^n}{n!}
\int_0^\infty
t^n e^{-(\lambda + \lambda_D)t}
\, dt.
\end{align*}

\noindent The above is a standard Gamma integral. For any $a>0$ and integer $n \geqslant 0$,
\begin{align*}
    \int_0^\infty t^n e^{-a t}\, dt = \dfrac{n!}{(\lambda + \lambda_D)^{n+1}}\quad \text{with}\quad  a = \lambda + \lambda_D.
\end{align*}
\noindent Substituting back yields the closed-form expression
\[
\Pr[\widetilde{N} = n]
=
\lambda_D
\dfrac{\lambda^n}{(\lambda + \lambda_D)^{n+1}},
\qquad n = 0,1,2,\dots
\]

\noindent This expression can be rewritten as
\[
\Pr[\widetilde{N} = n]
=
\dfrac{\lambda_D}{\lambda + \lambda_D}
\left(
\dfrac{\lambda}{\lambda + \lambda_D}
\right)^n.\quad \text{Defining}
\]

\begin{align*}
    & p = \dfrac{\lambda_D}{\lambda + \lambda_D}, \quad \text{i.e.,} \quad 1 - p = 1 - \dfrac{\lambda_D}{\lambda + \lambda_D} = \dfrac{\lambda}{\lambda + \lambda_D}\\
    & \text{then} \quad f_{\widetilde{N}}(n) = \Pr[\widetilde{N} = n] = p(1-p)^n
\end{align*}

\noindent Therefore, $N$ follows a geometric distribution, counting the number of attacker steps before the defender returns. The probability that the attacker completes at least $n$ steps before the defender becomes active again is therefore
\[
\Pr[\widetilde{N} \geqslant n] = \sum_{k = n}^{\infty} p(1-p)^k = (1-p)^n = \left( \dfrac{\lambda}{\lambda + \lambda_D} \right)^n.
\]

\section{Case studies and numerical illustration} \label{cases_and_results}

For our experiments, We analyze three cases: two robotic systems and a virtual network deployed by Dynatrace company. For each, we present computational results and discuss their practical implications. To align with the literature while clarifying differences, we employ modified versions of the original attack graphs. In these graphs, some internal nodes, originally classified as targets, have descending nodes that are also targets. To maintain consistency with our model, where target nodes the null external degree nodes, we treat original target nodes with outgoing edges as intermediate nodes. These intermediates represent potential defensive points whose protection can prevent subsequent attacks. Consequently, the optimal defense often involves safeguarding selected sub-targets. Additionally, the nodes included in the defense policy depend on the defender’s assumed knowledge of the network, as some nodes may be inaccessible due to limited awareness.

\subsection{Case studies}
\noindent The different attack graphs evaluated are then described as follows.
\begin{enumerate}
    \item \emph{Modular articulated robotic arm (MARA):} Designed for professional industrial applications such as pick-and-place tasks, the MARA modular robot represents a significant advancement in collaborative technology. Its architecture is built around individual sensors and actuators that each possess native ROS 2.0 integration, allowing for a seamless physical expansion of the system. This decentralized architecture ensures precise synchronization, deterministic communication, and a structured hardware-software lifecycle. As reported by Alias Robotics (2019)\footnote{https://aliasrobotics.com/case-study-threat-model-mara.php}, MARA combines modular flexibility with robust standardized frameworks, expanding operational capabilities.

    \item \emph{Mobile industrial robotics (MiR100):} Studies by Alias Robotics (2020, 2021) identify key cybersecurity vulnerabilities in the MiR100, an autonomous mobile robot used to optimize internal logistics. In controlled experiments, a single compromised unit demonstrated how attackers could move laterally across subsystems, ultimately breaching the safety system. This exposes risks to the robot’s secure, automated operation despite its intended workflow optimization~\footnote{https://aliasrobotics.com/case-study-threat-model-mara.php, https://aliasrobotics.com/case-study-pentesting-MiR.php}.

    \item \emph{Unguard Virtual Lab:} The Unguard platform is a cloud-native microservices demo resembling a social media application, with a three-tier architecture (frontend, API, backend) comprising eight services, a load generator, and databases such as Redis and MariaDB. It is intentionally designed with critical vulnerabilities, including remote code execution (RCE), server-side request forgery (SSRF), and SQL injection, exposing internet-facing proxies like Envoy to potential compromise \cite{Dynatrace_LLC_Unguard_An_Insecure}. To analyze these threats, we employed the Cybersecurity AI (CAI) tool, which automates reconnaissance and exploitation through autonomous security agents \cite{mayoral2025cai, mayoral2025hacking}. Different agents yield different attack graphs depending on their capabilities; here, we selected the "Bug Bounter" agent for multi-step attack planning and service version detection. This produces a comprehensive attack graph that visualizes exploitation paths and lateral movements, enabling proactive identification and mitigation of systemic risks.
\end{enumerate}

\noindent For each attack graph above, we compare alternative defense strategies to assess their performance and practical effectiveness.

\subsection{Defender's strategies comparison}

To assess the performance of the proposed optimal strategy, we compare it with two intuitive heuristics that a defender might employ. Specifically, we evaluate the attacker’s success probability against a rational defender using the optimal strategy versus an uninformed defender following one of the heuristics defined below.

\begin{itemize}
    \item \textbf{Shortest-path heuristic:} When the defender lacks knowledge of which attack paths the attacker may choose, a natural approach is to assume the attacker follows a shortest path to the target node. Formally, given an attack graph $G = (V, E)$ with target $v_t \in V$, the defender assumes the attacker selects a path $\pi^\ast \in \argmin_{\pi: v \leadsto v_t} |\pi|$, where $|\pi|$ is the number of edges along $\pi$. This heuristic provides a baseline, reflecting the defender’s assumption that the attacker prefers the simplest route toward $v_t$.

    \item \textbf{Random defense heuristic:} In most adversity scenarios like ours, the defender knows that the attacker will not necessarily follow the shortest path as described previously, for some reason (e.g., lack of computational resources or intelligence), the defender cannot adopt the recommended optimal strategy.  We then study a defense strategy of the defender that consists of deploying resources randomly in the network, depending on its set of actions. This approach ignores specific attack patterns and serves as a benchmark for uninformed defensive actions.
\end{itemize}
 
We simulated all three scenarios (Stackelberg, Blind, and Dirichlet frameworks) on each attack graph, with the defender acting at random intervals ($\lambda_{\text{defender}} = 1$) while the attacker performs an average of two moves per time unit ($\lambda_{\text{attacker}} = 2$). Moreover, it should be remembered that the defender in his optimal strategy has no information on the attacker’s initial positions and therefore assumes a uniform distribution over all the nodes to which he has access to act.

\subsection{Results of experiments}

\subsubsection{MARA case}

The MARA robot’s attack graph comprises $9$ nodes and $9$ edges, including a single entry point (node $1$) and two targets (nodes $6$ and $9$), as shown in Figure~\ref{MARA_graph}. Figure~\ref{MARA_result_summary} summarizes the performance of the optimal strategy against the shortest-path and random heuristics. Node prioritization, represented by circle sizes on the graph (size corresponds to the priority of the node in the protection order), highlights that the defender focuses on nodes closest to the targets, reflecting the practical consideration that attackers may be stealthy and can initiate attacks from multiple points. Quantitatively, the optimal strategy consistently outperforms the heuristics: the attacker’s probability of success is significantly higher under the shortest-path or random defense than when the defender follows the proposed optimal strategy. This confirms the advantage of strategically allocating resources even under uncertainty about the attacker’s initial position. Notably, in the Dirichlet scenario, analytical results indicate that a defender using the Stackelberg-optimal strategy may be less effective if both players’ positions are drawn from the same Dirichlet distribution, highlighting the sensitivity of defense performance to the underlying assumptions about the attacker’s behavior.

\begin{figure}[ht!]
\vspace{-1cm}
\hspace*{-10cm} \includegraphics[width=3.5\linewidth]{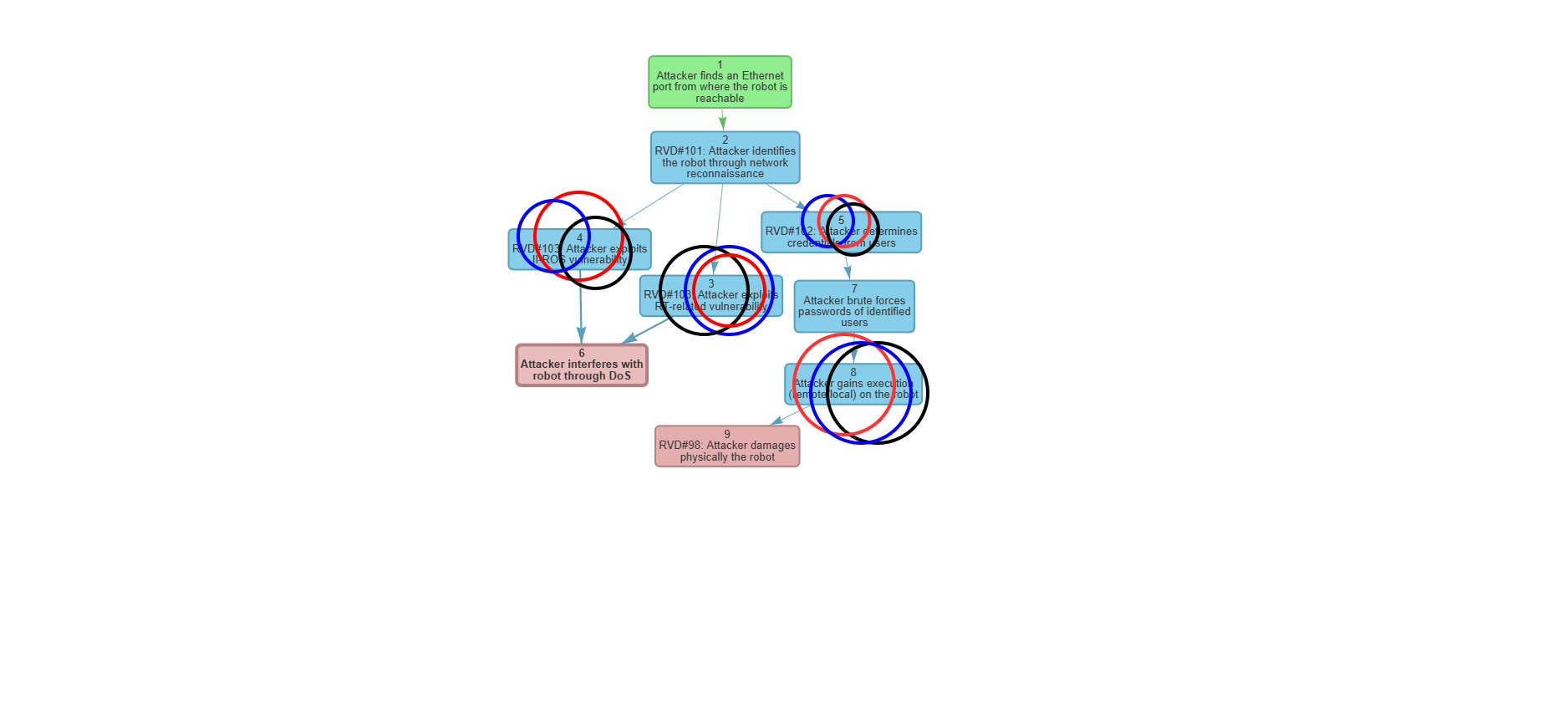}
\vspace{-5cm}
\caption{\textit{MARA network topology.} Visualization of the defender’s optimal node allocation across the three analyzed frameworks. Red circles indicate the Stackelberg-optimal deployment, blue circles correspond to the blind strategy, and black circles represent the Dirichlet-based strategy.}
\label{MARA_graph}
\end{figure}

\begin{figure*}[t]
    \centering
    \includegraphics[width=1\linewidth]{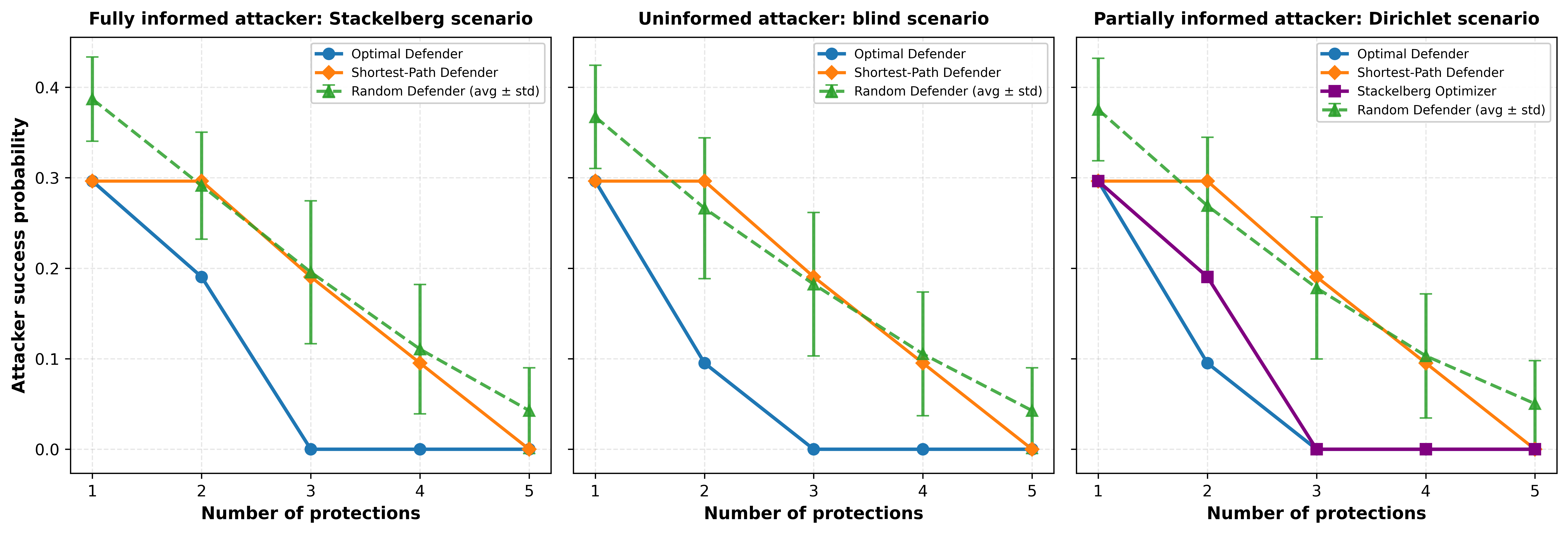}
    \caption{\textbf{Comparison of defensive strategies under varying attacker information: MARA case.} The results show that the optimal defense consistently achieves the lowest attacker success probability by prioritizing nodes closest to the targets. Both shortest-path and random heuristics yield significantly higher breach probabilities, confirming the benefit of strategic allocation under uncertainty. However, when both players follow a Dirichlet belief structure, applying the Stackelberg policy directly may reduce defensive performance, illustrating the sensitivity of outcomes to modeling assumptions. 
}
    \label{MARA_result_summary}
\end{figure*}

\begin{figure*}[t]
    \centering
    \includegraphics[width=1\linewidth]{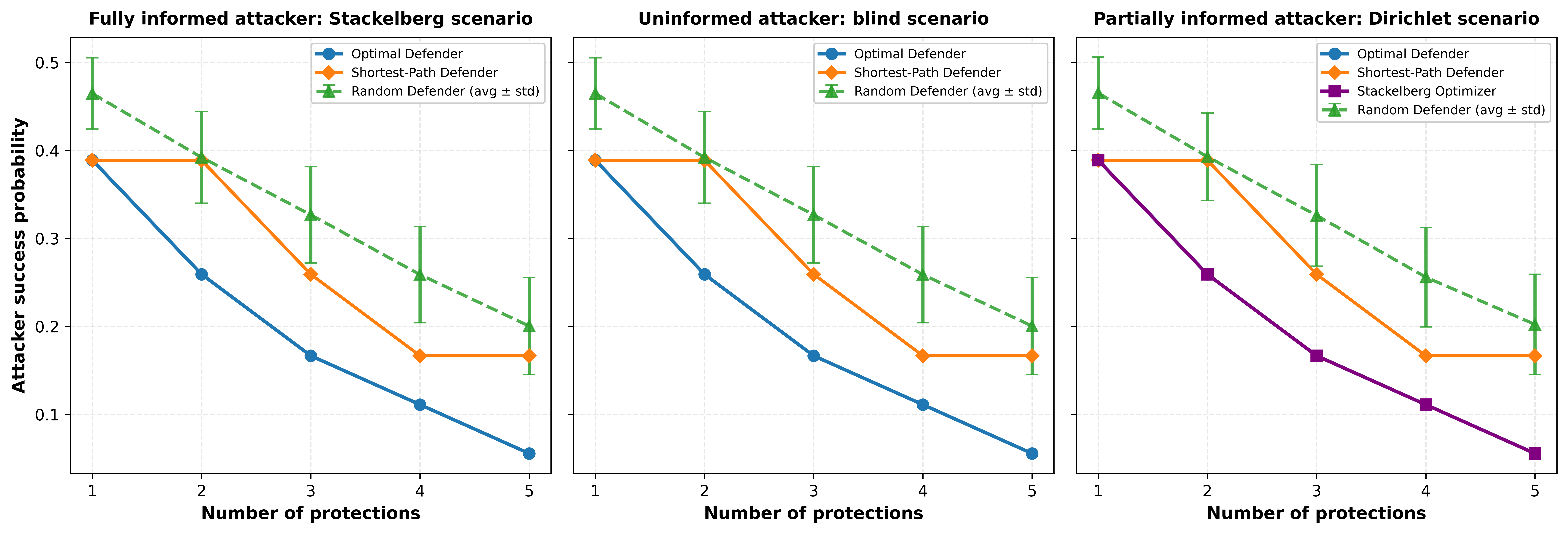}
    \caption{\textbf{Comparison of defensive strategies under varying attacker information: MiR100 case.} Due to low path diversity and the presence of dominant bottlenecks (notably node 15), all three game-theoretic frameworks converge to the identical performance curves. Topology dominates belief assumptions: once key points are protected, attacker behavior becomes effectively unique. This case illustrates that in highly constrained graphs, identifying structural bottlenecks is sufficient to recover the full benefit of strategic optimization.}
    \label{MiR100_result_summary}
\end{figure*}

\subsubsection{MiR100 case}

Figure~\ref{MiR100_graph} presents the MiR100 attack graph, composed of 16 nodes and 24 edges, with four entry points (nodes $1$–$4$) and four critical targets (nodes $12$, $13$, $14$, $16$). To make the prioritization clear, node priorities are visualized through bubble sizes, reflecting the allocation intensity of defensive resources. Figure~\ref{MiR100_result_summary} compares the optimal defense with the shortest-path and random heuristics.

\noindent \textbf{Structural dominance and bottlenecks.} A key feature of this graph is its low path diversity: many entry-target pairs are connected by a single feasible route. For example, entry node 1 can reach targets 12 and 13 only through $1 \to 5 \to 15 \to {12,13}$, and entry node 4 reaches target 14 only via $4 \to 7 \to 14$. Node 15 lies on all paths from entry 1 to targets ${12,13,16}$, while nodes ${7 ,8, 10, 15}$ collectively intercept most viable attack routes. Betweenness centrality confirms this structural concentration (BC$(15)=0.402$, BC$(10)=0.201$, BC$(8)=0.271$, BC$(7)=0.156$), clearly identifying node 15 as the principal bottleneck. Furthermore, with detection parameters $\lambda_{\text{attacker}} = 2$ and $\lambda_{\text{defender}} = 1$, the geometric success probability $P_{\text{success}}(\ell) \;=\; \left(\dfrac{\lambda_{\text{attacker}}}{\lambda_{\text{attacker}}  + \lambda_{\text{defender}}}\right)^{\!\ell} \;=\; \left(\dfrac{2}{3}\right)^{\ell}$,
decays rapidly with path length: a five-hop path succeeds with only $13.2\%$ probability compared to $44.4\%$ for a two-hop path. This geometric decay effectively removes longer alternatives from the attacker’s rational choice set, reinforcing the dominance of short, bottlenecked routes.

\noindent \textbf{Convergence of all three frameworks:} A central result is that the Stackelberg, Blind, and Dirichlet formulations yield identical optimal defender strategies and identical performance curves. This convergence probably stems from the graph topology. Because each state admits essentially a unique value-maximizing action for the attacker (i.e., $|\displaystyle\argmax_a Q(s, a)| = 1$ in most states), equilibrium, best-response, and belief-averaged policies coincide. Once the attacker’s behavior is invariant across frameworks, the defender faces the same optimization problem in all cases. Hence, topology dominates belief specification and information structure. The resulting optimal deployment ${15,8,7,9,11}$ consistently concentrates protection near targets and structural choke points, which is also visually reflected in the bubble prioritization.

\begin{figure}[ht!]
\vspace{-0.75cm}
    \hspace*{-7.95cm} \includegraphics[width=3\linewidth]{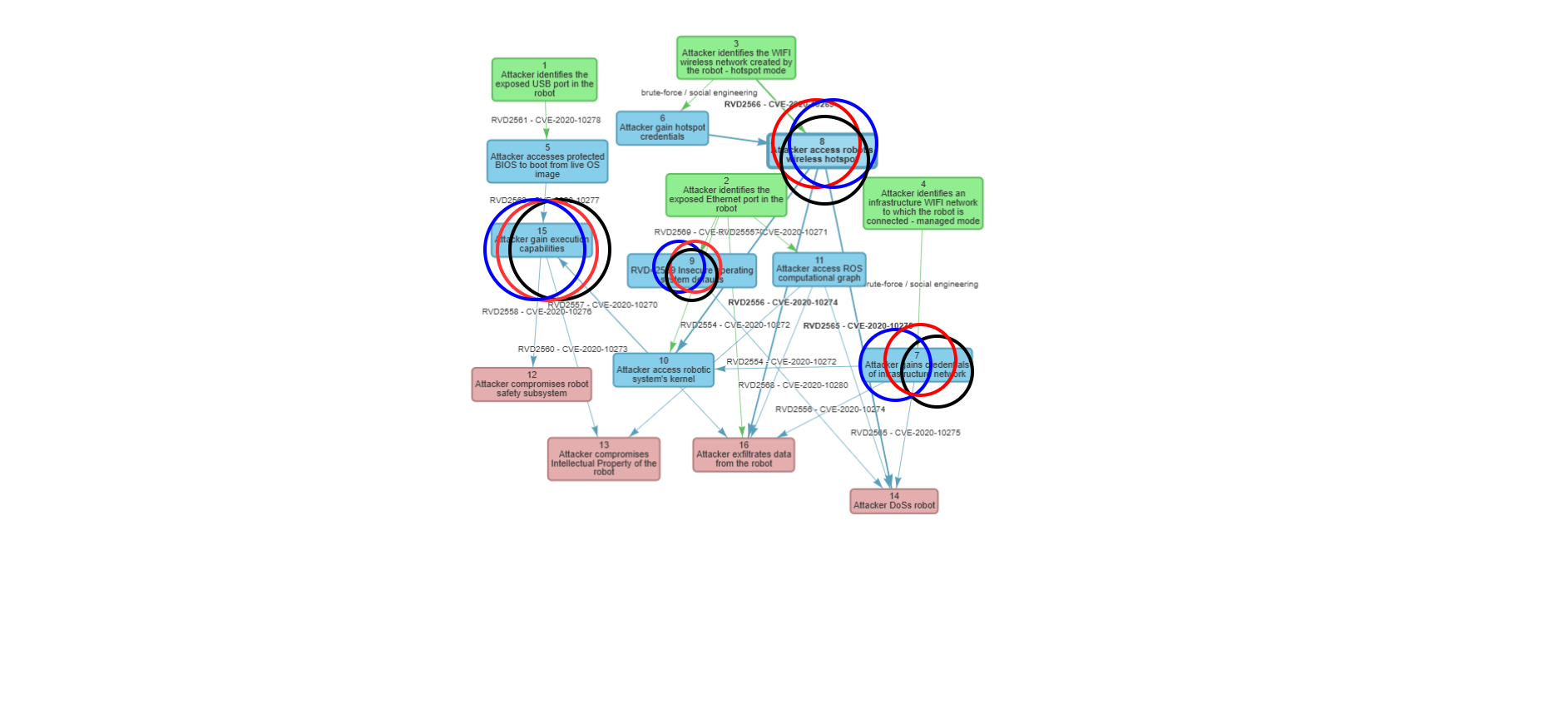}
    \vspace{-3.5cm}
    \caption{\textit{MiR100 network topology.} Illustration of how the defender prioritizes nodes under the three studied game-theoretic frameworks. Red circles show Stackelberg-optimal choices, blue circles indicate the blind approach, and black circles mark the Dirichlet-based allocation.}
    \label{MiR100_graph}
\end{figure}

\noindent \textbf{Heuristic performance gaps:} Although the optimal frameworks coincide, the heuristics reveal meaningful performance differences. The shortest-path heuristic matches the optimal strategy when $h = 1$ (both select node 15), since its dominance is evident even without strategic modeling. However, for $h \geqslant 2$, divergence appears, and this discrepancy increases with $h$, reaching attacker success probabilities of $0.167$ under the heuristic versus $0.025$ under the optimal strategy at $h=5$. The random defense performs substantially worse across all resource levels, with average attacker success rates approximately twice those of the optimal deployment and high variance across simulations. This variability illustrates the absence of systematic bottleneck coverage under random allocation. Overall, the MiR100 case represents a regime where structural constraints outweigh informational assumptions: low path diversity, clear choke points, and strong geometric decay neutralize differences between game-theoretic formulations. For practitioners, this implies that identifying and securing dominant bottlenecks is sufficient to recover the full strategic benefit of the model, independently of how attacker beliefs are specified.

\subsubsection{Unguard network case}

The Unguard attack graph (Figure~\ref{Unguard_graph}) is significantly more complex, comprising $33$ nodes and $57$ edges. The threat model involves a single entry node (node $1$, external attacker) and five critical target nodes $\{29,30,31,32,33\}$, corresponding respectively to full system compromise, sensitive data exposure, account takeover, unauthorized actions, and bypassed network segmentation. Figure~\ref{Unguard_result_summary} evaluates the optimal defense against heuristic strategies. As in previous cases, node importance is visualized through circle scaling.

\begin{figure}[ht!]
\vspace{-0.5cm}
    \hspace*{-7.75cm} \includegraphics[width=2.75\linewidth]{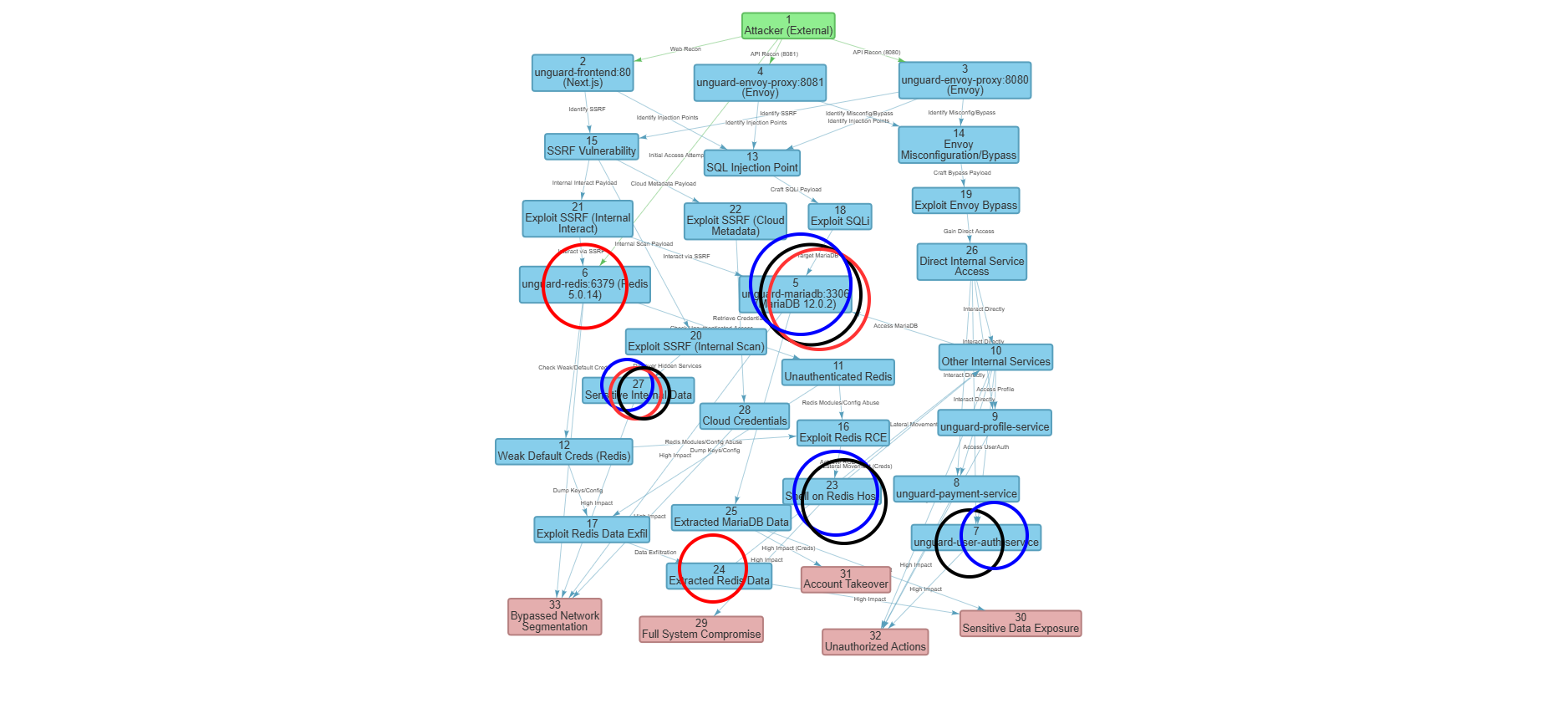}
    \vspace{-1cm}
    \caption{\textit{Unguard network topology.} Representation of the defender’s optimal protections across the three cases considered. The red markers denote Stackelberg-optimal placement, the blue markers the blind framework, and the black markers correspond to the Dirichlet scenario.}
    \label{Unguard_graph}
\end{figure}

\begin{figure*}[t]
    \centering
    \includegraphics[width=1\linewidth]{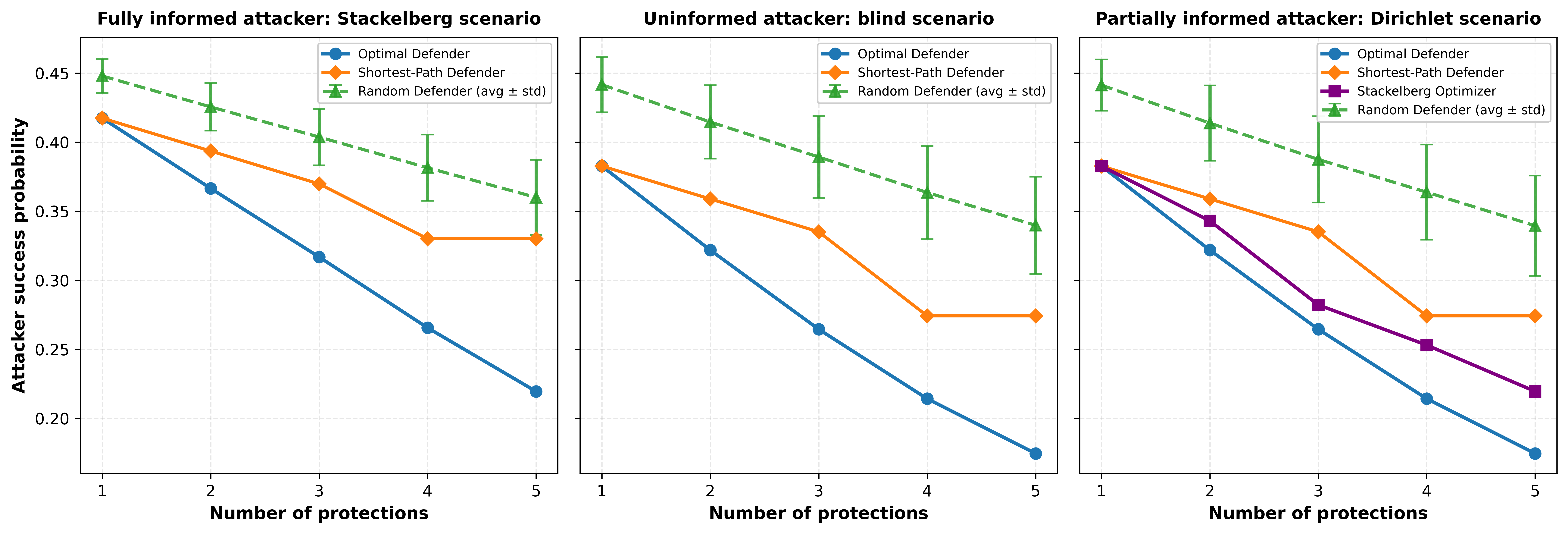}
    \caption{\textbf{Comparison of defensive strategies under varying attacker information: Unguard case.} In this high-diversity network, no single bottleneck exists, and multiple redundant attack paths make the attacker’s optimal choice belief-dependent. The three game-theoretic frameworks produce distinct optimal strategies, with the highest impact achieved by prioritizing nodes that appear across multiple attack vectors (e.g., node 5). The shortest-path heuristic performs poorly, while random allocation is inconsistent. Strategic optimization reduces attacker success from $0.275$ (shortest-path) to $0.09$ at $h = 5$, highlighting a $3\times$ improvement and demonstrating the critical value of game-theoretic reasoning in complex, redundant networks.}
    \label{Unguard_result_summary}
\end{figure*}

Unlike the MiR100 graph, Unguard exhibits high path diversity structured around four largely independent attack vectors: Redis, SQL injection, Envoy bypass, and SSRF. A key structural property is that no single node intercepts all attack routes. Hence, there is no main bottleneck. Each vector also contains redundant sub-paths, preventing complete neutralization of a vector through a single protection. As a result, the three game-theoretic frameworks produce distinct optimal strategies and performance curves. The divergence arises because the attacker’s optimal action becomes belief-dependent when multiple viable routes coexist. Moreover, the optimal strategies achieve the lowest attacker success probabilities, decreasing sharply from approximately $0.375$ at $h = 1$ to $0.09$ at $h = 5$. The reduction is driven by prioritizing nodes that simultaneously disrupt multiple attack chains. For example, node 5 (MariaDB) lies on both the SQLi path ($13 \to 18 \to 5$) and the SSRF path ($21 \to 5$); protecting it disables two independent vectors. Subsequent protections similarly target high-impact shared components, producing consistent gains. This confirms that, under uncertainty about attacker progression, concentrating defenses near shared downstream resources yields robust protection. Importantly, consistent with the analytical results, directly applying the Stackelberg solution in the Dirichlet setting leads to performance degradation. The Stackelberg strategy is optimized for a fixed belief, whereas the Dirichlet framework requires robustness across a distribution of possible attacker beliefs.

\noindent \textbf{Heuristic limitations.}  The shortest-path heuristic exhibits nearly flat performance, reflecting its structural inadequacy in redundant graphs. Blocking the shortest path within a vector simply redirects the attacker to a parallel alternative. Furthermore, while random allocation occasionally covers shared components by chance, it offers no reliability guarantee. The advantage of strategic reasoning, then, is most pronounced in this complex topology. At $h = 5$, the optimal defender achieves an attacker success probability of $0.09$, compared to $0.275$ under the shortest-path heuristic, corresponding to a $3\times$ increase in attacker success when game-theoretic optimization is ignored. This substantial gap quantifies the cost of heuristic reasoning in networks characterized by redundancy and multiple independent attack vectors.

\section{Conclusion}\label{conclusion}

This work investigates optimal defensive strategies to impede a strategically behaving attacker already present within a network. Recognizing that interaction outcomes depend on the attacker's information quality, we studied three scenarios: a fully informed attacker, a non-informed attacker, and a partially informed attacker. Each scenario was evaluated on real attack graphs, including the MARA robot, MiR100 robot, and the Unguard microservices network. Our results reveal a clear distinction: in low-diversity graphs with critical bottlenecks, all frameworks converge to the same optimal strategy, indicating that network topology can dominate information assumptions; in high-diversity networks with redundant paths, optimal strategies diverge, and the choice of framework significantly influences performance, with game-theoretic defenses reducing, in some instances, attacker success by more than a factor of three compared to heuristic approaches. These findings provide practical guidance: assess network structure, identify bottlenecks nodes, and select the strategic framework according to attacker flexibility and uncertainty, by prioritizing the nodes close to the targets. Overall, effective defense depends on both graph topology and attacker modeling, and rigorous, topology-aware, game-theoretic planning offers substantial and quantifiable advantages over conventional heuristics. Nevertheless, the proposed framework presents several limitations. The analysis assumes that the defender operates without observation feedback and therefore does not update beliefs regarding the system’s infection state or the attacker’s location within the network. Furthermore, detection mechanisms are modeled under the assumption of perfect reliability, with sensors achieving a unit probability of detection, which may not accurately reflect real-world operational environments.

\end{document}